\newcommand{\footremember}[2]{%
    \footnote{#2}
    \newcounter{#1}
    \setcounter{#1}{\value{footnote}}%
}
\newtheorem{theorem}{Theorem}
\newtheorem{lemma}[theorem]{Lemma}
\newtheorem{proposition}[theorem]{Proposition}
\newtheorem{corollary}[theorem]{Corollary}
\theoremstyle{definition}
\newtheorem{definition}[theorem]{Definition}
\newcounter{example}
\newenvironment{example*}{\par\medskip\noindent\textbf{Example.}}{\par\hfill$\triangle$\par\medskip}
\begin{document}

\title{Fully device-independent quantum key distribution using synchronous correlations}

\author{Nishant Rodrigues
    \footremember{quics}{Joint Center for Quantum Information and Computer Science, College Park, MD} 
    \footremember{umdcs}{Department of Computer Science, University of Maryland, College Park, MD}
    \and Brad Lackey 
    \footremember{msrquantum}{Quantum Systems Group, Microsoft Quantum, Redmond, WA}
}

\maketitle

\newcommand{\bra}[1]{\ensuremath \langle{#1}|}%
\newcommand{\ket}[1]{{\ensuremath |{#1}\rangle}}%
\newcommand{\bracket}[2]{\ensuremath \langle{#1}|{#2}\rangle}%
\newcommand{\h}[1]{\ensuremath \mathfrak{#1}}%
\newcommand{\indicator}[1]{\ensuremath \mathbb{1}_{\{#1\}}}%
\newcommand{\id}{\ensuremath \mathrm{id}}%
\newcommand{\tr}{\ensuremath{\mathrm{tr}}}
\newcommand{\ip}[2]{\ensuremath \langle{#1}|{#2}\rangle}%
\newcommand{\category}[1]{\ensuremath{\mathsf{#1}}}
\newcommand{\Hom}{\mathrm{Hom}}
\newcommand{\given}{\ensuremath \:|\:}
\newcommand{\covec}[1]{\ensuremath \undertilde{#1}}

\begin{abstract}
We derive a device-independent quantum key distribution protocol based on synchronous correlations and their Bell inequalities. This protocol offers several advantages over other device-independent schemes including symmetry between the two users and no need for preshared randomness. We close a ``synchronicity'' loophole by showing that an almost synchronous correlation inherits the self-testing property of the associated synchronous correlation. We also pose a new security assumption that closes the ``locality'' (or ``causality'') loophole: an unbounded adversary with even a small uncertainty about the users' choice of measurement bases cannot produce any almost synchronous correlation that approximately maximally violates a synchronous Bell inequality.
\end{abstract}

\begin{section}{Introduction}

Quantum key distribution (QKD) allows two parties to establish a shared classical secret key using quantum resources. Two main requirements of QKD are
\begin{enumerate}
    \item Correctness: the two parties, Alice and Bob, get the same key; and
    \item Security: an adversary Eve gets negligible information about the key.
\end{enumerate}
Device-independent quantum key distribution (DI-QKD) is entanglement-based, and aims to prove security of QKD based solely on the correctness of quantum mechanics, separation of devices used by the two parties, and passing of statistical tests known as Bell violations \cite{vazirani2014fully,miller2016robust}. These protocols are usually specified by a non-local game, characterized by a conditional probability distribution or \emph{correlation} $p(y_A, y_B \given x_A, x_B)$. Intuitively, Alice and Bob obtain or generate random inputs $x_A$ and $x_B$ respectively, and the correlation describes the likelihood their entangled quantum devices return outputs $y_A$ and $y_B$ to each respectively. We will be interested in symmetric correlations and so will take $x_A, x_B \in X$ and $y_A, y_B \in Y$ where $X$ and $Y$ are finite sets; for our protocol specifically $X = \{0,1,2\}$ and $Y = \{0,1\}$.

In general, security of a DI-QKD scheme relies on the concept of monogamy of entanglement. The key mathematical result is that maximally entangled quantum states are separable within any larger quantum system. In cryptographic terms, if Alice and Bob share a maximally entangled state then the results of measurements they make on this state will be uncorrelated to any other measurement results an adversary can perform. Hence, presuming the correctness of quantum mechanics, no adversary can have any information about key bits Alice and Bob may generate through this process. At a high level, a DI-QKD protocol will involve two types of rounds: testing rounds where Alice and Bob (publicly) share their inputs and output results for performing statistics tests, and data rounds where they obtain shared secret bits. The goal of the testing rounds is to produce a certificate that Alice and Bob are operating on maximally entangled states.

Most current DI-QKD schemes are based on the CHSH inequality. This is a linear inequality in the correlation terms $p(y_A, y_B \given x_A, x_B)$, which if satisfied characterizes classical statistics within a quantum system. Hence a violation of this inequality is a certificate of quantum behavior. This inequality exhibits ``rigidity'' in that the only quantum state that produces a maximal violation of the inequality is (up to natural equivalences) a Bell pair: two maximally entangled qubits. Thus the goal of the testing rounds in a DI-QKD protocol is to statistically verify that the system produces a maximal violation of the CHSH inequality.

A technical assumption made on a non-local game is, that while Alice and Bob may preshare an entangled resource in each round, they are not allowed any communication between receiving or generating their inputs $x_A$ and $x_B$ and measuring the system to obtain their outputs $y_A$ and $y_B$. This is typically called a ``no-signaling'' condition, leading to \emph{nonsignaling correlations} which include all quantum strategies. If (classical) communication between Alice and Bob is possible, then it is simple to classically simulate a correlation that produces a maximal violation of the CHSH inequality, and hence any certificates of quantumness or entanglement are void \cite{toner2003communication}. This \emph{locality} or \emph{causality loophole} in the security proof is challenging to avoid; the only known means to close it is by having Alice and Bob acausally separated during each round: bounds on the speed of light prevent such communication \cite{hensen2015loophole,giustina2015significant,shalm2015strong}. 

A \emph{synchronous correlation} is one such that $p(y_A, y_B \given x, x) = 0$ whenever $y_A \neq y_B$ and $x \in X$. That is, whenever Alice and Bob input the same value they are guaranteed to receive the same outputs, although that value may be nondeterministic. These correlations have recently become popular owing to their use in the resolution of the Connes Embedding Conjecture and Tsirl'son's Problem \cite{ji2020mip}, but have also been used to generalize combinatorial properties to the quantum setting \cite{mancinska2014graph,paulsen2016estimating,kim2018synchronous}. While it is far from obvious, every synchronous quantum correlation is symmetric (see Appendix \ref{appendix:synchronous} for details).

We present a fully device-independent QKD protocol based on synchronous correlations. This protocol has the property of symmetry between Alice and Bob, by which we mean that their roles are completely interchangeable. This is an advantage over other DI-QKD protocols based on the CHSH inequality \cite{vazirani2014fully} (which is neither symmetric nor synchronous) as sender versus receiver roles do not need to be negotiated. Additionally, as Alice and Bob select their inputs independently they do not need preshared secret bits to decide when to perform testing versus data rounds.

The mathematical framework needed to prove device-independent security of this protocol was laid out in \cite{rodrigues2017nonlocal}: four analogues of the Bell/CHSH inequality for synchronous correlations were given (in this work we focus only on one of these $J_3(p) \geq 0$, see (\ref{eqn:effective-J3}) below for its explicit form), bounds on quantum violations of these--so call Tsirl'son bounds--were characterized ($J_3(p) \geq -\frac{1}{8}$), and rigidity of correlations that achieve a maximal violation by Bell states proven. The two critical analyses needed to complete a proof of security for our DI-QKD protocol are as follows.
\begin{itemize}
    \item As one cannot statistically guarantee maximal violation of a Bell inequality, we must prove that if the system is observed to be close to the maximal violation then it is close to the ideal system involving measuring a Bell pair.
    \item Provide an alternative security assumption that bypasses the causality loophole as described above.
\end{itemize}

We tackle the first of these through two theorems. For context, Alice and Bob will independently uniformly select their inputs from $X = \{0,1,2\}$ and each measure a quantum system that produces a bit for output $Y = \{0,1\}$. The ideal system, that produces $J_3(p) = -\frac{1}{8}$, involves measuring a Bell pair using three specific projection-valued measures $\{\hat{E}^x_y\}_{y=0,1}$ for $x = 0,1,2$; these are given in (\ref{eqn:projections-exy}) below. In Section \ref{sec:protocol} we show that if we take a \emph{synchronous} quantum system that is close to achieving maximal $J_3$ violation, then it must be close to the ideal system. Our initial Protocol A, given as Algorithm \ref{alg:synchronousQKD} below, captures this simplified DI-QKD method.


\begin{theorem}{(Informal)}
Let $p(y_A, y_B \given x_A, x_B) = \frac{1}{d}\tr(E^{x_A}_{y_A}E^{x_B}_{y_B})$ be a synchronous quantum correlation, where for each $x=0,1,2$ we have $\{E^x_y\}_{y=0,1}$ are projection-valued measures on a $d$-dimensional Hilbert space $\mathfrak{H}$. Suppose $J_3(p) \leq -\frac{1}{8} + \lambda$. Then each $E^x_y \approx \tilde{E}^x_y = L^x_y + \hat{E}^x_y\otimes \mathbb{1}$, in that there is a universal constant $C$ with
$$\frac{1}{3}\sum_{x,y} \frac{1}{d}\tr\left( \left( E^x_y - \tilde{E}^x_y\right)^2\right) \leq C \lambda,$$
where $L^x_y$ are projections on a subspace of dimension at most $\lambda d$.
\end{theorem}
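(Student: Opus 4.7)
The plan is to extract from the approximate Bell violation $J_3(p) \leq -\frac{1}{8} + \lambda$ a set of approximate algebraic identities among the projections $E^x_y$, measured in the tracial seminorm $\|A\|_\tau^2 := \frac{1}{d}\tr(A^*A)$. In the exact case ($\lambda = 0$), the rigidity result of \cite{rodrigues2017nonlocal} tells us that the unique synchronous maximizers are the projections $\hat{E}^x_y$ on $\mathbb{C}^2$, and the argument giving rigidity implicitly exhibits $J_3 + \frac{1}{8}$ as a sum of (trace seminorm squares of) polynomial expressions $R_i(E)$ that vanish precisely when the $E^x_y$ satisfy the defining relations of the $\hat{E}^x_y$. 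Re-examining that argument for operators on an arbitrary $d$-dimensional Hilbert space, with $\tau = \frac{1}{d}\tr$ playing the role of the tracial state, should yield $\|R_i(E)\|_\tau^2 \leq C_i \lambda$ for each relation $R_i$.

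Next I would promote these trace-state-approximate relations to an approximate structural decomposition. Because the $\hat{E}^x_y$ generate an algebra isomorphic to $M_2(\mathbb{C})$, the approximate relations identify an approximate representation of this algebra inside $\mathcal{B}(\h{H})$. The natural construction is a spectral projection $\Pi$ onto the subspace where canonical combinations of the $E^x_y$ — such as the anticommutators $E^x_0 E^{x'}_0 + E^{x'}_0 E^x_0$ for $x\ne x'$, whose eigenvalues are pinned by the ideal representation — are within $O(\sqrt{\lambda})$ of their ideal values. A Markov-style inequality applied to the operator $\sum_i R_i(E)^* R_i(E)$ forces $\frac{1}{d}\tr(\mathbb{1}-\Pi) \leq C'\lambda$, so the bad subspace $K := \ker \Pi$ has dimension at most $\lambda d$ after rescaling constants. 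On $K^\perp$ the compressed projections satisfy the defining relations up to an operator-norm error that functional calculus then rounds away, and the representation theory of $M_2(\mathbb{C})$ on a finite-dimensional Hilbert space decomposes $K^\perp \cong \mathbb{C}^2 \otimes \h{K}$ with the rounded projections acting as $\hat{E}^x_y \otimes \mathbb{1}_{\h{K}}$.

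Setting $L^x_y$ equal to the restriction of $E^x_y$ to $K$ (extended by zero to the complement) and $\tilde{E}^x_y := L^x_y + \hat{E}^x_y \otimes \mathbb{1}_{\h{K}}$, the residual $E^x_y - \tilde{E}^x_y$ is supported essentially on $K^\perp$ and measures the rounding error there. Bounding its tracial norm by a weighted sum of the $\|R_i(E)\|_\tau^2$ controls each summand by $O(\lambda)$; averaging over the six pairs $(x,y)$ then produces the stated inequality with a universal constant $C$ absorbing the $C_i$ and the combinatorics of the $\hat{E}^x_y$.

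The main obstacle I expect is the stability step: controlling, in the tracial norm on a possibly very high-dimensional representation, how far a tuple satisfying approximate polynomial relations can be from an exact representation plus a small-trace defect. The saving grace is that the target algebra $M_2(\mathbb{C})$ is finite-dimensional and semisimple, so one should be able to bypass the full Gowers--Hatami machinery in favor of the direct spectral-projection construction above. The delicate quantitative point will be ensuring that the functional-calculus rounding perturbs each $E^x_y$ by $O(\sqrt{\lambda})$ in $\|\cdot\|_\tau$ rather than in operator norm — which is exactly where the tracial framework of synchronous correlations makes the problem tractable.
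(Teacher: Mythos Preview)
Your proposal outlines a plausible stability-of-representations strategy, but it diverges substantially from the paper's proof, which is far more concrete and avoids any Gowers--Hatami-type machinery.

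The paper's argument hinges on a single relation: setting $M_x = E^x_0 - E^x_1$ and $\Delta = M_0 + M_1 + M_2$, one computes directly that $\frac{1}{d}\tr(\Delta^2) = 1 + 8J_3 \leq 8\lambda$. Rather than feeding this into an abstract stability lemma, the paper applies \emph{two projections theory} (Halmos) to the pair $E^0_0, E^1_0$. This produces an \emph{exact}, not approximate, block decomposition $\mathfrak{H} = \mathfrak{L}_{00} \oplus \mathfrak{L}_{01} \oplus \mathfrak{L}_{10} \oplus \mathfrak{L}_{11} \oplus \bigoplus_j \mathfrak{H}_j$ into common eigenspaces of $M_0,M_1$ and two-dimensional blocks parameterized by angles $\theta_j$. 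The junk subspace $\mathfrak{L} = \bigoplus_{\alpha,\beta} \mathfrak{L}_{\alpha\beta}$ is then bounded by showing $\Delta^2$ is uniformly bounded below by a constant there ($\geq 1$ on $\mathfrak{L}_{01}\oplus\mathfrak{L}_{10}$ and $\geq 3$ on $\mathfrak{L}_{00}\oplus\mathfrak{L}_{11}$), so Markov against $\frac{1}{d}\tr(\Delta^2)\leq 8\lambda$ gives $\dim\mathfrak{L}/d = O(\lambda)$ with no threshold trade-off. On each two-dimensional block one simply replaces $\theta_j$ by the ideal $\hat\theta = 2\pi/3$; the resulting error $\frac{8}{d}\sum_j \sin^2(\theta_j-\hat\theta)$ is again dominated by $\frac{1}{d}\tr(\Delta^2)$ via an elementary trigonometric inequality, and $M_2$ is handled by writing $M_2 - \tilde M_2$ as $\Delta + (\tilde M_1 - M_1)$ minus a junk term.

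Your route is more general---it would adapt to other target algebras---but the quantitative step you flag as delicate is genuinely so in your formulation: if your spectral cutoff $\Pi$ uses threshold $O(\sqrt{\lambda})$ on the relation operators, Markov only yields $\frac{1}{d}\tr(\mathbb{1}-\Pi) = O(1)$, not $O(\lambda)$; if instead you use a constant threshold you recover the $O(\lambda)$ junk dimension, but then the operator-norm error on $K^\perp$ is $O(1)$ and the functional-calculus rounding needs a separate argument to land at $O(\lambda)$ in tracial norm. The paper's use of an exact structural decomposition is precisely what dissolves this tension: the junk is identified algebraically rather than spectrally, and the two-dimensional blocks are handled by a direct parameter estimate rather than by rounding an approximate representation.
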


Unfortunately this result introduces a new ``synchronicity loophole'' in our prove of security: rigidity holds among synchronous correlations, but are there nonsynchronous correlations with $J_3 = -\frac{1}{8}$ that cannot be use to certify maximal entanglement? In Section \ref{section:measures}, we close this loophole using recent work on ``almost synchronous'' correlations \cite{vidick2021almost}. This leads to our complete DI-QKD scheme Protocol B, given as Algorithm \ref{alg:synchronousQKD-protocolB} below. Extended to our framework the result is informally stated as follows.

\begin{theorem}{(Informal)}
Let $p(y_A, y_B \given x_A, x_B) = \bra\psi E^{x_A}_{y_A}\otimes (E^{x_B}_{y_B})^T \ket\psi$ be a symmetric correlation with asynchronicity $S = \frac{1}{3}\sum_x \sum_{y_A \not= y_B} p(y_A, y_B \given x, x)$, and let $\ket\psi = \sum_{j=1}^r \sqrt{\sigma_j} \sum_{m=1}^{d_j} \ket{\phi_{j,m}}\otimes \ket{\phi_{j,m}}$ be the Schmidt decomposition. Suppose $J_3(p) \leq -\frac{1}{8} + \lambda$. Then there exist synchronous correlations with projection-values measures $\tilde{E}^{j,x}_y = L^{j,x}_y + \hat{E}^x_y\otimes \mathbb{1}$ defined on Hilbert spaces $\mathfrak{H}_j = \mathrm{span}\{\ket{\phi_{j,m}}\}$ where $p$ is close to the convex sum of these, in that there are universal constants $c,C_1,C_2$ where
$$\frac{1}{3}\sum_{x,y} \sum_{j=1}^{r} \sigma_j d_j \left(\frac{1}{d_j}\sum_{m=1}^{d_j}\bra{\phi^A_{j,m}}(E^{x}_y - \tilde{E}^{j,x}_y)^2\ket{\phi^A_{j,m}}\right) \leq C_1S^{c} + C_2\lambda.$$
\end{theorem}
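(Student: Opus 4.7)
The plan is to reduce the almost-synchronous setting to the fully synchronous setting of Theorem~1 by decomposing the $E^x_y$ along the Schmidt blocks $\mathfrak{H}_j = \mathrm{span}\{\ket{\phi_{j,m}}\}$ and then applying Theorem~1 block by block. First I would invoke the almost-synchronous correlation theorem of \cite{vidick2021almost}. For the symmetric correlation arising from $\ket\psi = \sum_j \sqrt{\sigma_j}\sum_m \ket{\phi_{j,m}}\otimes\ket{\phi_{j,m}}$, that result (adapted to our three-input setup) says that each $E^x_y$ is approximately block-diagonal with respect to the Schmidt projectors $\Pi_j = \sum_m \ket{\phi_{j,m}}\bra{\phi_{j,m}}$, with error polynomial in $S$:
\[
\frac{1}{3}\sum_{x,y}\sum_j \sigma_j d_j \cdot \frac{1}{d_j}\sum_m \bra{\phi_{j,m}}\bigl(E^x_y - \Pi_j E^x_y \Pi_j\bigr)^2\ket{\phi_{j,m}} \leq K_0 S^{c_0},
\]
and that the block correlation $p_j(y_A, y_B \given x_A, x_B) = \frac{1}{d_j}\tr(\Pi_j E^{x_A}_{y_A}\Pi_j E^{x_B}_{y_B}\Pi_j)$ is synchronous, with the full correlation $p$ lying within $O(S^{c_0})$ of the mixture $\sum_j \sigma_j d_j\, p_j$.

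Second, I would push the Bell violation onto the blocks. Since $J_3$ is linear in the correlation and $J_3(q) \geq -\tfrac{1}{8}$ for every synchronous quantum $q$, the mixture identity gives $\sum_j \sigma_j d_j\,(J_3(p_j)+\tfrac{1}{8}) \leq \lambda + O(S^{c_0})$. A Markov step then bounds the total Schmidt weight of ``bad'' blocks, those with $J_3(p_j) + \tfrac{1}{8} > \mu$, by $O((\lambda + S^{c_0})/\mu)$. Choosing $\mu = \sqrt{\lambda + S^{c_0}}$, I apply Theorem~1 on each good block (a genuinely synchronous correlation on the maximally entangled state of rank $d_j$) to obtain projections $L^{j,x}_y$ of dimension at most $C'\mu\, d_j$ and an embedding of the ideal POVM $\hat{E}^x_y$ satisfying
\[
\frac{1}{3}\sum_{x,y}\frac{1}{d_j}\tr\!\bigl((\Pi_j E^x_y\Pi_j - \tilde{E}^{j,x}_y)^2\bigr) \leq C'\mu, \qquad \tilde{E}^{j,x}_y = L^{j,x}_y + \hat{E}^x_y\otimes\mathbb{1}.
\]
On bad blocks I would use only the trivial operator bound $\|E^x_y - \tilde{E}^{j,x}_y\| \leq 2$ and pay their small total Schmidt weight.

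Finally, I would assemble the two errors by writing $E^x_y - \tilde{E}^{j,x}_y = (E^x_y - \Pi_j E^x_y \Pi_j) + (\Pi_j E^x_y \Pi_j - \tilde{E}^{j,x}_y)$, squaring with $(a+b)^2 \leq 2(a^2+b^2)$, and summing against the weights $\sigma_j d_j \cdot \frac{1}{d_j}\sum_m \bra{\phi_{j,m}}(\,\cdot\,)\ket{\phi_{j,m}}$: the first term is controlled by the Vidick bound of Stage~1 and the second by the per-block rigidity of Stage~3 (with the trivial bound absorbing the bad-block contribution). This yields the advertised inequality $C_1 S^c + C_2\lambda$ after taking $c = c_0/2$ and absorbing constants. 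The main obstacle will be the bookkeeping between two different conventions: Vidick's decomposition is naturally stated against the state expectation $\bra\psi(\,\cdot\,)\otimes\mathbb{1}\ket\psi$, while Theorem~1 is stated against the normalized trace $\frac{1}{d_j}\tr_{\mathfrak{H}_j}$ on each block, and one must carefully verify that $\sigma_j d_j \cdot \frac{1}{d_j}\tr_{\mathfrak{H}_j}$ is exactly the weight the state puts on block $j$ so that the per-block rigidity bounds aggregate correctly against the Schmidt weights $\sigma_j d_j$ without any dependence on the number $r$ of Schmidt levels.
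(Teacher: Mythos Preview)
Your overall architecture matches the paper's: invoke Vidick to obtain per-block synchronous correlations $p_j$ close to $p$, apply Theorem~1 on each block to get the ideal-form $\tilde{E}^{j,x}_y$, then combine the two errors via $(a+b)^2 \leq 2a^2 + 2b^2$. The difference is your Markov step, and it is not needed---in fact it costs you the stated linear dependence on $\lambda$.

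The point you are missing is that the bound of Theorem~1 is \emph{linear} in the block-wise deficit $\lambda_j = J_3(p_j)+\tfrac{1}{8} \geq 0$. So once you know $\sum_j \sigma_j d_j\,\lambda_j \leq \lambda + O(S^{c_0})$, you can simply sum the per-block bounds directly:
\[
\sum_j \sigma_j d_j \cdot \frac{1}{3}\sum_{x,y}\frac{1}{d_j}\tr\bigl((E^{j,x}_y - \tilde{E}^{j,x}_y)^2\bigr)
\ \leq\ C_2 \sum_j \sigma_j d_j\,\lambda_j \ \leq\ C_2\lambda + C_2 C' S^{c_0},
\]
with no good/bad split and no choice of threshold $\mu$. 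Your Markov route instead balances $C'\mu$ against $(\lambda + S^{c_0})/\mu$, producing $O(\sqrt{\lambda + S^{c_0}})$, i.e.\ a $\sqrt{\lambda}$ dependence rather than the claimed $C_2\lambda$. So as written your proposal does not deliver the advertised bound; dropping the Markov step and summing linearly fixes this and is exactly what the paper does.

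One smaller issue: the compressions $\Pi_j E^x_y \Pi_j$ are in general only POVM elements on $\mathfrak{H}_j$, not projections, so you cannot feed them directly into Theorem~1. Vidick's theorem already hands you genuine projection-valued measures $\{E^{j,x}_y\}$ on each $\mathfrak{H}_j$ (together with the closeness bound to $E^x_y$); use those as the intermediate objects, as the paper does, and then Theorem~1 applies verbatim.
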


Finally, in Section \ref{section:causality-loophole}, we pose a new security assumption to close the \emph{causality} or \emph{locality} loophole. Our new security assumption gives unlimited communication and computational power to the adversary Eve, but assumes that she has imperfect knowledge of Alice and Bob's inputs. For Eve's uncertainty about Alice and Bob's inputs, denoted by $\epsilon$, where $0 \leq \epsilon \leq 2/3$, we derive information theoretic bounds for how much that uncertainty is allowed to grow before she ends up with an infeasible cheating strategy. We state our result informally as follows.

\begin{theorem}{(Informal)}
Let $0 \leq \lambda \leq \frac{1}{8}$ and $0\leq \mu \leq \mu_0$ be allowed errors in Alice and Bob's Bell term $J_3$ and asynchronicity $S$ respectively. Also, let $\tilde{J}_3$ and $\tilde{S}$ be analogous Bell inequality and asynchronicity terms for Eve's strategy. 
For $0\leq \delta \leq \tilde{S}$, there exists a function $f: \mathbb{R} \times \mathbb{R} \times \mathbb{R} \to \mathbb{R}$ of $\delta, \mu$ and $\lambda$ such that if $\epsilon^{\delta}_{max} = 2/3 - f(\delta,\mu,\lambda)$, and
Eve's uncertainty is $\epsilon > \epsilon^{\delta}_{max}$ then every correlation satisfies  $\tilde{S} < \delta$, and hence there is no feasible strategy she can produce.
\end{theorem}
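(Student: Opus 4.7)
The plan is to explicitly model Eve's strategy as a pair $(\tilde p, q)$, where $\tilde p(y_A, y_B \given \tilde x_A, \tilde x_B)$ is the correlation she prepares on the devices and $q(\tilde x_A, \tilde x_B \given x_A, x_B)$ is the conditional distribution capturing how her guesses $(\tilde x_A, \tilde x_B)$ depend on the actual inputs $(x_A, x_B)$. The correlation Alice and Bob observe is then the convex mixture
\[
p(y_A, y_B \given x_A, x_B) \;=\; \sum_{\tilde x_A,\tilde x_B} q(\tilde x_A, \tilde x_B \given x_A, x_B)\,\tilde p(y_A, y_B \given \tilde x_A, \tilde x_B),
\]
and Eve's uncertainty $\epsilon$ is the (appropriately marginalized) probability that a guess differs from the true input, with $\epsilon = 2/3$ corresponding to the no-information limit of uniform guessing over $X=\{0,1,2\}$.

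The next step is to derive two propagation inequalities. For asynchronicity, splitting the sum defining $S$ according to whether Eve's guesses on Alice's and Bob's wires happen to agree yields an inequality of the form $S \geq (1-r(\epsilon))\tilde S + r(\epsilon)\,\bar{s}(\tilde p)$, where $r(\epsilon)$ is the probability of guess-disagreement on matching true inputs and $\bar s(\tilde p)$ is the average off-diagonal asynchronicity of $\tilde p$. For the Bell expression, a similar decomposition writes the observed $J_3$ as a convex combination of contributions from matched-guess cells (which are constrained by $\tilde J_3$) and mismatched-guess cells of Eve's strategy (which are a priori unconstrained). Invoking Theorem~2 applied to Eve's on-diagonal strategy converts a small value of $\tilde J_3 + \tfrac{1}{8}$ together with $\tilde S$ into an approximation by the canonical synchronous correlation $\{\hat E^x_y\}$, which in turn controls the off-diagonal contributions up to explicit error terms.

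Combining the two inequalities with the hypotheses $S \leq \mu$ and $J_3 \leq -1/8 + \lambda$ produces an implicit bound of the form $\tilde S \leq G(\epsilon,\mu,\lambda)$ where $G$ is monotone in each argument and $G(2/3,0,0)=0$. Defining $f(\delta,\mu,\lambda)$ as the least gap below $2/3$ required to make $G(\epsilon,\mu,\lambda) \geq \delta$ achievable, i.e.\ inverting $G$ in its first argument at level $\delta$, and setting $\epsilon^\delta_{max} = 2/3 - f(\delta,\mu,\lambda)$, the contrapositive of this bound yields the stated theorem.

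The step I expect to be hardest is controlling the mismatched-guess cells of Eve's strategy, because synchronicity and the Bell bound only constrain the diagonal $\tilde x_A = \tilde x_B$ cells of $\tilde p$, so in principle Eve is free to choose $\tilde p(\cdot,\cdot \given \tilde x_A, \tilde x_B)$ for $\tilde x_A \neq \tilde x_B$ adversarially. Making the bound nontrivial will require using the rigidity/self-testing content of Theorem~2 on the diagonal to force Eve's overall strategy close to the canonical Bell-maximizing measurements, from which the off-diagonal behavior is also constrained by the structure of those measurements on a shared near-maximally-entangled state. A secondary technical difficulty is that the Schmidt-structured bound in Theorem~2 mixes convex sums with $j$-dependent local perturbations $L^{j,x}_y$; carrying these through the averaging against Eve's guess distribution $q$ without blowing up the error term will require a careful Jensen/Cauchy--Schwarz argument, and probably motivates the fractional exponent $c$ appearing in Theorem~2 showing up inside the function $f$.
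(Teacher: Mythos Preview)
Your plan rests on a misconception that sends you down an unnecessarily hard path. You write that ``synchronicity and the Bell bound only constrain the diagonal $\tilde x_A = \tilde x_B$ cells of $\tilde p$,'' but in fact $J_3$ is built \emph{exclusively} from off-diagonal cells: by its very formula (see (\ref{eqn:effective-J3})),
\[
1 - J_3 \;=\; \tfrac14\sum_{x_A\neq x_B}\bigl(p(0,1\mid x_A,x_B)+p(1,0\mid x_A,x_B)\bigr).
\]
So the ``mismatched-guess cells'' you worry about are precisely what $1-\tilde J_3$ already measures, and Eve is \emph{not} free to choose them adversarially: she must keep $\tilde J_3$ near $-\tfrac18$ or the observed Bell test fails. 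The step you anticipated as hardest therefore disappears, and the detour through the rigidity statement (Theorem~2) is not used anywhere in this argument.

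What the paper actually does is much simpler and purely linear-algebraic. Both $1-J_3$ and $S$ are $S_3$-symmetric linear functionals of the quantities $p(0,1\mid x_A,x_B)+p(1,0\mid x_A,x_B)$, supported respectively on the off-diagonal and the diagonal orbit of input pairs. Eve's symmetric guessing channel (correct with probability $1-\epsilon$, each wrong value with probability $\epsilon/2$, independently on the two wires) acts on this two-dimensional space by an explicit $2\times 2$ matrix, yielding the \emph{exact} identities
\[
\begin{pmatrix}1-J_3\\ S\end{pmatrix}
=
\begin{pmatrix}1-\epsilon+\tfrac34\epsilon^2 & \tfrac32\epsilon-\tfrac98\epsilon^2\\ \tfrac43\epsilon-\epsilon^2 & 1-2\epsilon+\tfrac32\epsilon^2\end{pmatrix}
\begin{pmatrix}1-\tilde J_3\\ \tilde S\end{pmatrix},
\]
not inequalities. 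One then inverts the matrix, substitutes $1-J_3=\tfrac98-\lambda$ and $S=\mu$, reads off $\tilde S$ as an explicit rational function of $\epsilon,\lambda,\mu$, sets $\tilde S=\delta$, and solves the resulting quadratic in $\epsilon$ to get a closed form for $f(\delta,\mu,\lambda)$. No Jensen, no Cauchy--Schwarz, no self-testing.
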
 

The conclusion is that Eve must have close to perfect knowledge of Alice and Bob's inputs to successfully simulate the statistics for the protocol. We derive an expression for $f$ in the theorem above, and plot Eve's uncertainty against varying values of Alice and Bob's allowed asynchronicity. 

In this paper we focus on the tools needed to complete the security proof, including mathematical analysis of the asynchronous case, and the analysis of the new security assumption to resolve the causality loophole. The security proof for the protocol then follows from arguments in \cite{arnon2019simple} where they present a framework based on the Entropy Accumulation Theorem of \cite{dupuis2020entropy} to analyze security and correctness of device-independent QKD protocols. 

\end{section}

\begin{section}{Preliminaries}
    We present some definitions that will be used in the protocol later. Like other device-independent schemes, our protocol is expressed in terms of a nonlocal game, which is characterized by a conditional probability distribution (or correlation) $p(y_A, y_B | x_A, x_B)$ where $x_A, x_B \in X$, and $y_A, y_B \in Y$ for finite sets $X$ and $Y$. By a nonlocal game we mean the players Alice and Bob will receive inputs $x_A, x_B \in X$ from a referee and will produce outputs $y_A, y_B \in Y$. These are then adjudicated by the referee against some criterion, which we will discuss below. Alice and Bob are allowed to use pre-shared information (such as classical randomness or entangled states), however, they are not allowed to communicate once they receive their inputs \cite{cleve2004consequences}. This is characterized by the famous nonsignaling conditions \cite{popescu1994quantum}, which for completeness we express here.
    
    \begin{definition}
        A correlation $p$ is \emph{nonsignaling} if it satisfies (i) for all $y_A,x_A,x_B,x_B'$
        $$\sum_{y_B} p(y_A,y_B\:|\:x_A,x_B) = \sum_{y_B} p(y_A,y_B\:|\:x_A,x_B'),$$
        and (ii) for all $y_B,x_B,x_A,x_A'$
        $$\sum_{y_A} p(y_A,y_B\:|\:x_A,x_B) = \sum_{y_A} p(y_A,y_B\:|\:x_A',x_B).$$
    \end{definition}
    
    We have selected our notation above so as to emphasize a symmetry between Alice and Bob. This will be a consequence of our correlations being \textit{synchronous}, which will form the basis of our quantum key distribution protocol. Formally:
    \begin{definition}
        A correlation is \emph{synchronous} if
        \begin{equation}\label{eqn:syncrhonous:definition}
           p(y_A,y_B\:|\:x,x) = 0 \text{ if $x\in X$ and $y_A\not= y_B$ $\in Y$.} 
        \end{equation}
    A correlation is \emph{symmetric} if $p(y_A,y_B\:|\:x_A,x_B) = p(y_B,y_A\:|\:x_B,x_A)$.
    \end{definition}
    It is straightforward for Alice and Bob to create a nonlocal game with synchronous correlation, regardless of how the referee selects $x_A, x_B \in X$: they agree on some function $f:X \to Y$ and output $y_A = f(x_A)$ and $y_B = f(x_B)$. The value of a nonlocal game is the expected success probability that Alice and Bob produce outputs of the desire form; since this is always $1$ for synchronous games the value plays little role.
    
    As is traditional with schemes derived from the CHSH or Magic Square games, or their generalizations \cite{mermin1990simple,peres1990incompatible,cleve2004consequences,arkhipov2012extending,coladangelo2017robust}, the analysis relies on understanding the space of local (or ``classical'' or ``hidden variables'') correlations. To set notation, we use $\h{H}_A, \h{H}_B$ to denote finite-dimensional Hilbert spaces. For each input $x \in X$, let $\{E^x_y\}_{y \in Y}$ be a POVM with measurement outcomes $y\in Y$, that is each $E^x_y$ is a positive operator and $\sum_y E^x_y = \mathbb{1}$. A projection-valued measure is a POVM $\{E^x_y\}_{y \in Y}$ where each $E^x_y$ is a projection. Then classical and quantum correlations are formally defined as follows.
    
    \begin{definition}
        A \emph{local hidden variables strategy}, or simply \emph{classical correlation}, is a correlation of the form
        \begin{equation*}
            p(y_A,y_B\:|\:x_A,x_B) = \sum_{\omega\in\Omega} \mu(\omega) p_A(y_A\:|\: x_A,\omega) p_B(y_B\:|\: x_B,\omega)
        \end{equation*}
        for some finite set $\Omega$ and probability distribution $\mu$.
        A \emph{quantum correlation} is a one that takes the form
        \begin{equation*}
            p(y_A,y_B\:|\:x_A,x_B) = \tr(\rho(E^{x_A}_{y_A}\otimes F^{x_B}_{y_B}))
        \end{equation*}
        where $\rho$ is a density operator on the Hilbert space $\h{H}_A\otimes\h{H}_B$, and for each $x\in X$ we have $\{E^x_y\}_{y\in Y}$ and $\{F^x_y\}_{y\in Y}$ are POVMs on $\h{H}_A$ and $\h{H}_B$ respectively. 
    \end{definition}
    
    Synchronous classical and quantum correlations can be further characterized. For example, every synchronous classical correlation arises from a generalization of the simple strategy above: Alice and Bob (randomly) pre-select a function $f:X \to Y$ and upon receiving $x_A,x_B \in X$ (deterministically) compute their outputs $y_A = f(x_A)$ and $y_B = f(x_B)$. Similarly, every synchronous quantum correlation can be expressed as a convex combination of so-called ``tracial'' states on projection-valued measures \cite{paulsen2016estimating,rodrigues2017nonlocal}
    $$p(y_A,y_B\:|\: x_A,x_B) = \frac{1}{d}\tr(E^{x_A}_{y_A}E^{x_B}_{y_B}).$$
    
    For input and output $X = \{0,1,2\}$ and $Y = \{0,1\}$, respectively, there are four Bell inequalities for synchronous hidden variables theories. By this we mean that the synchronous classical correlations (among general non-signalling synchronous correlations) are characterized by four inequalities $J_0, J_1, J_2, J_3 \geq 0$ where each $J_i = J_i(p)$ is a linear combination of the correlation components $p(y_A,y_B\:|\: x_A,x_B)$. See (\ref{eqn:Bell-inequalities}) in Appendix \ref{appendix:synchronous} for explicit formulas. For this work, we will focus only on one of these as given in (\ref{eqn:effective-J3}) below. 
    
    Synchronous quantum correlations can violate the inequality $J_3 \geq 0$. However one can show an analogue of Tsirl'son bound, in that any synchronous quantum correlation must have $J_3 \geq -\frac{1}{8}$. Of particular interest are correlations that maximize this quantum violation. Like CHSH or Magic Square games, one can show a rigidity result: there is a unique synchronous quantum correlation with $J_3 = -\frac{1}{8}$, which must involve a maximally entangled state shared between Alice and Bob. One can then use principle decomposition, or two projections theory, to convert this into a self-test for certifying a single EPR pair, hence the basis for device-independence. See Appendix \ref{appendix:synchronous} for a more detailed discussion of these results.

\end{section}

\begin{section}{A synchronous DI-QKD protocol} \label{sec:protocol}

    Here we present an initial form for a synchronous device-independent quantum key distribution protocol, Algorithm \ref{alg:synchronousQKD} below. The protocol uses a non-local game based on synchronous quantum correlations. To achieve device independence, we use a rigidity result based on Tsirl'son bounds for synchronous quantum correlations. We discuss these correlations in detail and state the Tsirl'son bounds in Appendix \ref{appendix:synchronous}. This protocol is symmetric with respect to Alice and Bob, each performs exactly the same task.
    
    Suppose Alice and Bob share an EPR pair. Each independently draws a uniformly random input $x_A, x_B \in X = \{0,1,2\}$ respectively, and measures according to $\{\hat{E}^{x_A}_y\}_{y \in Y}$ and $\{\hat{E}^{x_B}_y\}_{y \in Y}$ to get outputs $y_A, y_B \in Y = \{0,1\}$, where the projection-valued measures $\{\hat{E}^x_y\}_{y \in \{0,1\}}$ for $x \in \{0,1,2\}$ are defined as follows:
    \begin{equation}\label{eqn:projections-exy}
        \begin{aligned}
            \hat{E}^0_1 &= \ket{\phi_0}\bra{\phi_0}\text{, } \hat{E}^0_0 = \mathbb{1} - \hat{E}^0_1 \text{, } \quad \text{where } \ket{\phi_0} = \ket{1}\\ 
            \hat{E}^1_1 &= \ket{\phi_1}\bra{\phi_1}\text{, } \hat{E}^1_0 = \mathbb{1} - \hat{E}^1_1 \text{, } \quad \text{where } \ket{\phi_1} = \frac{\sqrt{3}}{2}\ket{0} + \frac{1}{2}\ket{1}\\
            \hat{E}^2_1 &= \ket{\phi_2}\bra{\phi_2}\text{, } \hat{E}^2_0 = \mathbb{1} - \hat{E}^2_1 \text{, } \quad \text{where } \ket{\phi_2} = \frac{\sqrt{3}}{2}\ket{0} - \frac{1}{2}\ket{1}\\
        \end{aligned}
    \end{equation}
    Alice's and Bob's results are characterized by the correlation
     \begin{equation*}
        p(y_A, y_B\:|\:x_A, x_B) = \frac{1}{2} \tr(\hat{E}^{x_A}_{y_A} \hat{E}^{x_B}_{y_B})
    \end{equation*}
    (see Theorem \ref{theorem:tracial-state} in Appendix \ref{appendix:synchronous}). In particular, Alice and Bob's strategy produces the synchronous quantum correlation with correlation matrix:
    \begin{equation}\label{eqn:protocol-correlation}
        [p(y_A,y_B|x_A,x_B)] = \frac{1}{8} \quad \begin{blockarray}{cccccccccc}
            (0,0) & (0,1) & (0,2) & (1,0) & (1,1) & (1,2) & (2,0) & (2,1) & (2,2) \\
            \begin{block}{(ccccccccc)@{\quad }l} 
                4 & 1 & 1 & 1 & 4 & 1 & 1 & 1 & 4 & (0,0)\\
                0 & 3 & 3 & 3 & 0 & 3 & 3 & 3 & 0 & (0,1)\\
                0 & 3 & 3 & 3 & 0 & 3 & 3 & 3 & 0 & (1,0)\\
                4 & 1 & 1 & 1 & 4 & 1 & 1 & 1 & 4 & (1,1)\\
            \end{block}
        \end{blockarray}
    \end{equation}
    
    One can easily verify this correlation yields a maximal violation of the Bell inequality, $J_3 = -\frac{1}{8}$, where
    \begin{equation}\label{eqn:effective-J3}
        \begin{array}{rcr@{\:}l}
            J_3 &=& 1 - \frac{1}{4} \big( & p(0,1\:|\:0,1) + p(1,0\:|\:0,1) + p(0,1\:|\:1,0) + p(1,0\:|\:1,0)\\
            && +& p(0,1\:|\:0,2) + p(1,0\:|\:0,2) + p(0,1\:|\:2,0) + p(1,0\:|\:2,0)\\
            && +& p(0,1\:|\:1,2) + p(1,0\:|\:1,2) + p(0,1\:|\:2,1) + p(1,0\:|\:2,1)\:\big).
        \end{array}
    \end{equation}
    In Appendix \ref{appendix:synchronous} we discuss the rigidity of this correlation. Specifically, one has that any synchronous quantum correlation that achieves $J_3 = -\frac{1}{8}$ must have implemented the strategy above. That is, this maximal violation of $J_3$ is a self-test of the device to detect interference from adversary: Alice and Bob can certify that their devices hold maximally entangled pairs, and by monogamy of entanglement can establish that Eve doesn't have any information about their inputs.
    
    Our initial protocol extends the above scenario to $n$ rounds. It is important to note that the observable for our synchronous Bell inequality (\ref{eqn:effective-J3}) only involves correlations where Alice and Bob use different inputs. This leads to two significant theoretical advantages of our system.
    \begin{enumerate}
        \item Neither Alice nor Bob must pre-select which rounds will used for testing versus key generation. Upon revealing their choices of bases, testing rounds given by those where they selected different bases and key generation rounds where they selected the same basis. In particular, they need not have any pre-shared randomness.
        \item Every round is effective, in that every testing round improves the estimate of $J_3$ and every key generation round produces one bit of uniform shared secret. 
    \end{enumerate}
    Of course no physical device adheres to theoretical model perfectly, so in practice one still must perform standard information reconciliation and privacy amplification on the results. The full protocol is presented in Algorithm \ref{alg:synchronousQKD}.
    
    Once the $n$ rounds of the protocol are over, Alice and Bob communicate their basis selection over an authenticated classical channel. In the case that they chose different bases (i.e. $x_A \neq x_B$), they exchange their measurement outcomes and use those to compute $J_3$. If the value of $J_3$ deviates too much from $-\frac{1}{8}$, they abort. The protocol is synchronous, therefore $y_A = y_B$ whenever $x_A = x_B$ and those can be used as the raw key bits for further standard privacy amplification and information reconciliation. 
    
    \begin{table}[t]
    \begin{algorithm}[H]\label{alg:synchronousQKD}
        \caption{Protocol A}
        \SetAlgoLined
        \DontPrintSemicolon
        \SetKwFor{For}{for}{do}{}
        \KwIn{$\lambda, n$}
        $X \leftarrow \{0,1,2\}$ and $Y \leftarrow \{0,1\}$\\
        Alice and Bob share $n$ EPR pairs: $\ket{\psi} = \frac{1}{\sqrt{2}}\left(\ket{00} + \ket{11}\right)$\\
        \For{$i = 1, \cdots, n$}{
            Alice draws $x^i_A \overset{\$}{\leftarrow} X$ and Bob draws $x^i_B \overset{\$}{\leftarrow} X$\\
            With the $i^{\text{th}}$ EPR pair, Alice obtains $y^i_A$ using $\{E^{x^i_A}_y\}$ and Bob obtains $y^i_B$ using $\{E^{x^i_B}_y\}$
        }
        Alice and Bob exchange their choices of $x^i_A, x^i_B$, for $i \in [n]$\\
        Whenever $x^i_A \neq x^i_B$, Alice and Bob exchange their results $y^i_A, y^i_B$\\
        $k \leftarrow \emptyset$\\
        \For {$i = 1,\cdots, n$}{
            \uIf{$x^i_A = x^i_B$}{
                $k \leftarrow k \cup y^i$, where $y^i := y^i_A = y^i_B$ due to synchronicity
            }
            \lElse{
                Add result to estimation of $J_3$
            }            
        }
        Compute an estimate $\hat{J}_3$ using (\ref{eqn:effective-J3})\\
        \uIf{ $|\hat{J}_3 + \frac{1}{8}| \leq \lambda$}{
            \textbf{Return } $k$ (for standard information reconciliation and privacy amplification)
        }\lElse{ Abort }
    \end{algorithm}
    \end{table}

    Our first main result is that the rigidity of synchronous quantum correlations with $J_3 = -\frac{1}{8}$ does show that nearby synchronous quantum correlations have the desired security.

    \begin{theorem}\label{theorem:main_bound}
        Let $p(y_A, y_B\:|\:x_A, x_B) = \frac{1}{d} \tr(E^{x_A}_{y_A} E^{x_B}_{y_B})$ be a synchronous quantum correlation with maximally entangled state, where $\{E^x_y\}$ is a projection-valued measure on a $d$-dimensional Hilbert space $\mathfrak{H}$. Suppose $J_3(p) \leq -\frac{1}{8} + \lambda$. Then on $\mathfrak{H} = \mathfrak{L} \oplus (\mathbb{C}^2\otimes\mathfrak{K})$ there exists a projection-value measure $\{\tilde{E}^x_y\}$ where
        \begin{enumerate}
            \item $\tilde{E}^x_y = L^x_y + \hat{E}^x_y\otimes \mathbb{1}_{\mathfrak{K}}$,
            \item $\frac{\dim{\mathfrak{L}}}{\dim{\mathfrak{H}}} \leq 8\lambda$,
            \item $\frac{1}{3} \sum_{x,y} \frac{1}{d}\tr\left(\left(E^x_y - \tilde{E}^x_y\right)^2\right) \leq 8\lambda$.
        \end{enumerate}
        In particular, the expected statistical difference
        $$\frac{1}{3}\sum_{x,y}\left|p(y,y\:|\:x,x) - \frac{1}{2}\right| \leq \sqrt{8}\sqrt{\lambda} + \tfrac{64}{3}\lambda.$$
    \end{theorem}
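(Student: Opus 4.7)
The plan is to reduce the Bell-inequality bound to a single Hilbert--Schmidt estimate for a simple combination of the three measurements, then use the principal decomposition (two-projections theory) to extract local structure. Writing $S_x := E^x_1$ and using $E^x_0 = \mathbb{1} - S_x$, each summand of the Bell functional rewrites as $p(0,1\:|\:x_A,x_B) + p(1,0\:|\:x_A,x_B) = \tfrac{1}{d}\bigl[\tr(S_{x_A}) + \tr(S_{x_B}) - 2\tr(S_{x_A}S_{x_B})\bigr]$, and summing over the six off-diagonal pairs yields the identity
\[
J_3(p) + \tfrac{1}{8} \;=\; \frac{1}{2d}\,\tr\!\Bigl(\bigl(T - \tfrac{3}{2}\mathbb{1}\bigr)^2\Bigr), \qquad T := S_0 + S_1 + S_2.
\]
Thus the hypothesis is equivalent to $\tfrac{1}{d}\tr((T - \tfrac{3}{2}\mathbb{1})^2) \leq 2\lambda$, i.e.\ $T$ is close in normalized Hilbert--Schmidt norm to its ideal value $\tfrac{3}{2}\mathbb{1}_{\mathfrak{H}}$ (which the three ideal projections $\hat{E}^x_1$ satisfy exactly).

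Next I would apply the principal decomposition to $S_0$ and $S_1$, producing an orthogonal splitting $\mathfrak{H} = \mathfrak{L} \oplus (\mathbb{C}^2 \otimes \mathfrak{K})$ in which $\mathfrak{L}$ is the subspace where $S_0, S_1$ commute (a direct sum of four $\{0,1\}^2$-eigensectors) and the generic factor carries rank-one projections at block-varying angles. On the generic piece the ideal triple $(\hat{E}^0_1, \hat{E}^1_1, \hat{E}^2_1)$ is the unique SIC-type solution attaining $T|_{\mathbb{C}^2} = \tfrac{3}{2}\mathbb{1}_{\mathbb{C}^2}$, so any deviation of the block angle or of the $S_2$-block contributes positively to $\tr((T-\tfrac{3}{2}\mathbb{1})^2)$. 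On $\mathfrak{L}$, writing $A := P_{\mathfrak{L}} S_2 P_{\mathfrak{L}}$ and $B := P_{\mathfrak{L}} S_2 P_{\mathfrak{L}^\perp}$, the projection identity $S_2^2 = S_2$ forces $BB^\dagger = A(\mathbb{1} - A)$; combining this with the diagonal block $(a+b-\tfrac{3}{2})\mathbb{1} + A|_{(a,b)}$ of $T - \tfrac{3}{2}\mathbb{1}$ shows each unit of $\dim(\mathfrak{L})$ contributes at least a fixed positive amount to $\tr((T - \tfrac{3}{2}\mathbb{1})^2)$ (the $(0,0)$ and $(1,1)$ sectors from the diagonal alone; the $(0,1)$ and $(1,0)$ sectors from diagonal plus leakage), which compared against $2d\lambda$ yields $\dim(\mathfrak{L})/d \leq \tfrac{32}{3}\lambda$. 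Setting $\tilde{E}^x_y := L^x_y + \hat{E}^x_y \otimes \mathbb{1}_{\mathfrak{K}}$ with $L^x_y := E^x_y|_{\mathfrak{L}}$ for $x = 0, 1$ and $L^2_y$ obtained by rounding $A$ to a projection, the Hilbert--Schmidt bound follows by summing the generic-part deviation (absorbed into the operator inequality) and the $\mathfrak{L}$-part contribution (at most $2\dim(\mathfrak{L})$ per pair, since a difference of projections supported there).

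Finally, for the statistical bound, $\{E^x_y\}$ being a PVM gives $p(y,y\:|\:x,x) = \tfrac{1}{d}\tr(E^x_y)$, and the triangle inequality splits
\[
|p(y,y\:|\:x,x) - \tfrac{1}{2}| \;\leq\; \tfrac{1}{d}|\tr(E^x_y - \tilde{E}^x_y)| + \bigl|\tfrac{1}{d}\tr(\tilde{E}^x_y) - \tfrac{1}{2}\bigr|.
\]
Cauchy--Schwarz bounds the first term by $\sqrt{\tr((E^x_y - \tilde{E}^x_y)^2)/d}$, and since $\tr(\tilde{E}^x_y) = \tr(L^x_y) + \dim(\mathfrak{K})$ with $\dim(\mathfrak{L}) + 2\dim(\mathfrak{K}) = d$, the second is at most $\dim(\mathfrak{L})/d \leq \tfrac{32}{3}\lambda$. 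Summing over the six $(x,y)$ pairs and applying Cauchy--Schwarz a second time to the square-root terms, combined with the Hilbert--Schmidt bound from the preceding paragraph, yields the claimed $\sqrt{8\lambda} + \tfrac{64}{3}\lambda$. The main obstacle is the per-sector analysis in the second paragraph: the $(0,1)$ and $(1,0)$ sectors of $\mathfrak{L}$ are delicate because the diagonal block of $T$ there can sit close to $\tfrac{3}{2}$, so the dimension lower bound must come from the off-diagonal leakage of $S_2$ via $BB^\dagger = A(\mathbb{1}-A)$; pushing this through while keeping the stated constants sharp is the technical crux.
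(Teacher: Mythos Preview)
Your approach coincides structurally with the paper's: since $M_x = \mathbb{1} - 2S_x$, your $T-\tfrac{3}{2}\mathbb{1}$ equals $-\tfrac{1}{2}\Delta$ where $\Delta = M_0+M_1+M_2$, and your identity $J_3+\tfrac{1}{8}=\tfrac{1}{2d}\tr((T-\tfrac{3}{2}\mathbb{1})^2)$ is exactly the paper's $\tfrac{1}{d}\tr(\Delta^2)=1+8J_3$. The subsequent two-projections decomposition and sector-by-sector dimension count follow the same route. Three points deserve comment.

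First, you have the difficulty of the $\mathfrak{L}$-sectors inverted. On your $(0,1)$ and $(1,0)$ sectors $(T-\tfrac{3}{2}\mathbb{1})\ket\psi = (S_2-\tfrac{1}{2}\mathbb{1})\ket\psi$, and since $S_2$ is a projection one has the operator identity $(S_2-\tfrac{1}{2}\mathbb{1})^2=\tfrac{1}{4}\mathbb{1}$, whence $\bra\psi(T-\tfrac{3}{2}\mathbb{1})^2\ket\psi=\tfrac{1}{4}$ \emph{exactly} for every unit $\ket\psi$ there. This is the computation the paper performs (in $\Delta$-language it reads $\bra\psi\Delta^2\ket\psi=1$); no leakage bookkeeping is needed. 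Your $BB^\dagger=A(\mathbb{1}-A)$ route does recover the same $\tfrac{1}{4}$ once applied with the correct split (sector versus its full complement in $\mathfrak{H}$, not $\mathfrak{L}$ versus $\mathfrak{L}^\perp$ as you wrote), but it is a detour, and what you flag as the ``technical crux'' is in fact the easiest step.

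Second, the generic-part Hilbert--Schmidt bound cannot simply be ``absorbed into the operator inequality''. The paper carries out an explicit angle calculation: it shows the $\mathfrak{H}_j$-contribution to $\tr(\Delta^2)$ is at least $6+4\cos 2\theta_j-4|\cos\theta_j|$, verifies this dominates $8\sin^2(\theta_j-\tfrac{2\pi}{3})$ for $\theta_j$ near $\tfrac{2\pi}{3}$, and thereby obtains $\tfrac{1}{d}\tr((M_1-\tilde{M}_1)^2)\le 8\lambda$; the $M_2$ bound then follows by a Jensen/triangle step using that $\tilde{M}_0+\tilde{M}_1+\tilde{M}_2$ is supported on $\mathfrak{L}$. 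Your sketch omits this quantitative link between angle deviation and the quadratic form.

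Third, your route to the statistical bound --- Cauchy--Schwarz on $\tfrac{1}{d}|\tr(E^x_y-\tilde{E}^x_y)|$ against the $\tfrac{52}{3}\lambda$ Hilbert--Schmidt estimate --- yields a leading constant of order $\sqrt{52/3}$, not $\sqrt{8}$. The paper instead bounds the biases $a_x=\tfrac{1}{d}\tr(M_x)$ directly: $|a_0|,|a_1|\le\dim\mathfrak{L}/d$ because $M_0,M_1$ are traceless on every generic block $\mathfrak{H}_j$, while $|a_0+a_1+a_2|=|\tfrac{1}{d}\tr(\Delta)|\le(\tfrac{1}{d}\tr(\Delta^2))^{1/2}\le\sqrt{8\lambda}$ by Cauchy--Schwarz against $\mathbb{1}$, giving $|a_2|\le\sqrt{8\lambda}+\tfrac{64}{3}\lambda$. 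This sharper argument bypasses $\tilde{E}^x_y$ entirely.
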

\begin{proof}
    We begin by defining the $\pm 1$-valued observables $M_x = E^x_0 - E^x_1$, so $M_x^2 = \mathbb{1}$, and following customary notation write
    $$a_x = \frac{1}{d}\tr(M_x) \text{ and } c_{x_Ax_B} = \frac{1}{d}\tr(M_{x_A}M_{x_B}).$$
    Similarly denote $\tilde{M}_x = \tilde{E}^x_0 - \tilde{E}^x_1$. Notice $E^x_0 = \frac{1}{2}(\mathbb{1} + M_x)$ and $E^x_1 = \frac{1}{2}(\mathbb{1} - M_x)$ so
    $$\frac{1}{3} \sum_{x,y} \frac{1}{d}\tr\left(\left(E^x_y - \tilde{E}^x_y\right)^2\right) = \frac{1}{6}\sum_x \frac{1}{d}\tr\left(\left(M_x - \tilde{M}_x\right)^2\right).$$
    
    Now define $\Delta := M_0 + M_1 + M_2$, and compute
    \begin{align}
        \Delta^2 &= M_0^2 + M_1^2 + M_2^2 + M_0M_1 + M_1M_0 + M_0M_2 + M_2M_0 + M_1M_2 + M_2M_1 \notag\\
        &= 3\mathbb{1} + M_0M_1 + M_1M_0 + (M_0 + M_1)M_2 + M_2(M_0 + M_1) \label{eqn:Delta-squared}\\
        &= \mathbb{1} + M_0M_1 + M_1M_0 + (M_0 + M_1 + M_2)M_2 + M_2(M_0 + M_1 + M_2) \notag\\
        &= \mathbb{1} + M_0M_1 + M_1M_0 + \Delta M_2 + M_2\Delta \label{eqn:Delta-squared-reduced}
    \end{align}
    We have $\Delta^2$ relates to $J_3$, and hence we obtain the following bound:
    \begin{align}
        \frac{1}{d}\tr(\Delta^2) &= \frac{1}{d}\tr\left(M_0^2 + M_1^2 + M_2^2 + 2M_0M_1 + 2M_0M_2 + 2M_1M_2\right) \notag\\
        &= \frac{3}{d}\tr\left(\mathbb{1}\right) + \frac{2}{d}\tr\left(M_0M_1 + M_0M_2 + M_1M_2\right) \notag\\
        &= 3 + 2(c_{01} + c_{02} + c_{12}) = 1 + 2(1 + c_{01} + c_{02} + c_{12}) = 1 + 8J_3 \notag\\
        &\leq 1 + 8\left(-\frac{1}{8} + \lambda\right) = 8\lambda \label{eqn:trace-Delta-squared-bound}
    \end{align}
        
    Using two projections theory \cite{amrein1994pairs,halmos1969two,boettcher2010gentle}, we have a decomposition of the Hilbert space $\h{H}$
    $$\h{H} = \h{L}_{00} \oplus \h{L}_{01} \oplus \h{L}_{10} \oplus \h{L}_{11} \oplus \bigoplus_{j=1}^k \h{H}_j,$$ where $\dim(\h{L}_{\alpha\beta}) = l_{\alpha\beta}$ for $\alpha,\beta \in \{0,1\}$, and $\dim(\h{H}_j) = 2$, where the projections $E^0_0$ and $E^1_0$ take the form:
    \begin{align*}
    E^0_0 &= 0_{l_{00}} \oplus 0_{l_{01}} \oplus \mathbb{1}_{l_{10}} \oplus \mathbb{1}_{l_{11}} \oplus \bigoplus_{j=1}^k \left(\begin{array}{cc} 1 & 0 \\ 0 & 0\end{array}\right)\\
    E^1_0 &= 0_{l_{00}} \oplus \mathbb{1}_{l_{01}} \oplus 0_{l_{10}} \oplus \mathbb{1}_{l_{11}} \oplus \bigoplus_{j=1}^k \left(\begin{array}{cc} \cos^2\theta_j & \sin\theta_j\cos\theta_j \\ \sin\theta_j\cos\theta_j & \sin^2\theta_j\end{array}\right).
    \end{align*}
    That is, we can express
    \begin{align*}
        M_0 &= -\mathbb{1}_{\h{L}_{00}} \oplus -\mathbb{1}_{\h{L}_{01}} \oplus \mathbb{1}_{\h{L}_{10}} \oplus \mathbb{1}_{\h{L}_{11}} \oplus \bigoplus_{j=1}^k \left(\begin{array}{cc} 1 & 0 \\ 0 & -1\end{array}\right),\\
        M_1 &= -\mathbb{1}_{\h{L}_{00}} \oplus \mathbb{1}_{\h{L}_{01}} \oplus -\mathbb{1}_{\h{L}_{10}} \oplus \mathbb{1}_{\h{L}_{11}} \oplus \bigoplus_{j=1}^k \left(\begin{array}{cc} \cos 2\theta_j & \sin 2\theta_j \\ \sin2\theta_j  & -\cos2\theta_j\end{array}\right).
    \end{align*}
    Now let us define $\tilde{M}_0, \tilde{M}_1, \tilde{M}_2$ as follows. Note that our ideal projections $\hat{E}^1_0, \hat{E}^1_1$ correspond to angle $\hat{\theta} = \frac{2\pi}{3}$, and without loss of generality we can assume\footnote{Direct examination of (\ref{eqn:projections-exy}) reveals that any $\theta_j$ is within $\frac{\pi}{6}$ of the image of some $E^x_y$; the bound we prove is symmetric in $x,y$ we may reorder the labeling in each $\mathfrak{H}_j$ so that $\theta_j$ is close to $E^1_0$ with $\hat\theta = \frac{2\pi}{3}$.} $|\theta_j - \hat{\theta}| \leq \frac{\pi}{6}$.  
    \begin{align*}
        \tilde{M}_0 = M_0 &= -\mathbb{1}_{\h{L}_{00}} \oplus -\mathbb{1}_{\h{L}_{01}} \oplus \mathbb{1}_{\h{L}_{10}} \oplus \mathbb{1}_{\h{L}_{11}} \oplus \bigoplus_{j=1}^k \left(\begin{array}{cc} 1 & 0 \\ 0 & -1\end{array}\right),\\
        \tilde{M}_1 &= -\mathbb{1}_{\h{L}_{00}} \oplus \mathbb{1}_{\h{L}_{01}} \oplus -\mathbb{1}_{\h{L}_{10}} \oplus \mathbb{1}_{\h{L}_{11}} \oplus \bigoplus_{j=1}^k \left(\begin{array}{cc} \cos 2\hat\theta & \sin 2\hat\theta \\ \sin2\hat\theta  & -\cos2\hat\theta\end{array}\right),\\
        \tilde{M}_2 &= \mathbb{1}_{\h{L}_{00}} \oplus \mathbb{1}_{\h{L}_{01}} \oplus -\mathbb{1}_{\h{L}_{10}} \oplus -\mathbb{1}_{\h{L}_{11}} \oplus \bigoplus_{j=1}^k \left(\begin{array}{cc} -1-\cos 2\hat\theta & -\sin 2\hat\theta \\ -\sin2\hat\theta  & 1+\cos2\hat\theta\end{array}\right).
    \end{align*}
    As desired, $\tilde{M}_x = (L^x_0 - L^x_1) + \hat{M}_x\otimes\mathbb{1}_{\mathbb{C}^k}$, where the $\{L^x_y\}$ are the projection onto the summands $\h{L}_{\mu\nu}$.
    
    
    First we bound the dimension of each $\mathfrak{L}_{\mu\nu}$. Consider the relation (\ref{eqn:Delta-squared}) for $\Delta^2$. If $\ket{\psi_{01}}\in \mathfrak{L}_{01}$, then
    \begin{align*}
        \bra{\psi_{01}} \Delta^2 \ket{\psi_{01}} &= \bra{\psi_{01}} (3\mathbb{1} + M_0M_1 + M_1M_0 + (M_0 + M_1)M_2 + M_2(M_0 + M_1) \ket{\psi_{01}}\\
        &= 3 - 1 - 1 + 0 + 0 = 1.
    \end{align*}
    The same equality holds for $\ket{\psi_{10}}\in \mathfrak{L}_{10}$, namely $\bra{\psi_{10}} \Delta^2 \ket{\psi_{10}} = 1$.
    
    For a vector $\ket{\psi_{00}}$ in $\h{L}_{00}$ we again use relation (\ref{eqn:Delta-squared}) to get 
    $$\bra{\psi_{00}} \Delta^2 \ket{\psi_{00}} =  3 + 1 + 1 - 4\bra{\psi_{00}} M_2 \ket{\psi_{00}}.$$
    Now from Cauchy-Schwarz, and that $M_2^2 = \mathbb{1}$, we have
    $$| \bra{\psi_{00}} M_2 \ket{\psi_{00}} | \leq | \langle {\psi_{00}} | {\psi_{00}} \rangle |^{\frac{1}{2}} | \bra{\psi_{00}} M_2^2 \ket{\psi_{00}} |^{\frac{1}{2}} = 1$$
    and thus $\bra{\psi_{00}} \Delta^2 \ket{\psi_{00}} \geq 1$. Similarly for  $\ket{\psi_{11}}$ in $\h{L}_{11}$ we have
    $$\bra{\psi_{11}} \Delta^2 \ket{\psi_{11}} = 5 + 4\bra{\psi_{11}} M_2 \ket{\psi_{11}} \geq 5 - 4|\bra{\psi_{11}} M_2 \ket{\psi_{11}}|\geq 1.$$
    
    Putting everything together, since $\bra{\psi_{\alpha\beta}}\Delta^2 \ket{\psi_{\alpha\beta}} \geq 1$ on each $\mathfrak{L}_{\alpha\beta}$, for $\alpha,\beta \in \{0,1\}$, summing over bases of the respective spaces
    $$\frac{l}{d} = \frac{1}{d} (l_{00} + l_{01} + l_{10} + l_{11}) \leq \frac{1}{d}\sum_{j=1}^{l} \bra{\psi_j} \Delta^2 \ket{\psi_j} \leq \frac{1}{d}\tr(\Delta^2) \leq 8\lambda.$$
    where the second-to-last inequality follows from $\Delta^2$ being positive semidefinite.

    This immediately provides the claimed bound on the statistical difference from uniform. We can explicitly bound the quantities $|a_0|$ and $|a_1|$ as follows:
    \begin{align*}
        |a_0| &= \frac{1}{d}\left|\tr(M_0)\right| = \frac{1}{d} |-l_{00} - l_{01} + l_{10} + l_{11}| \leq \frac{l}{d} \leq 8\lambda\\
        |a_1| &= \frac{1}{d}\left|\tr(M_1)\right| = \frac{1}{d} |-l_{00} + l_{01} - l_{10} + l_{11}| \leq \frac{l}{d} \leq 8\lambda.
    \end{align*}
    Using Cauchy-Schwarz, we bound $|a_2|$:
    \begin{align}
        a_0 + a_1 + a_2 &= \frac{1}{d}\tr(\Delta) \leq \left(\frac{1}{d}\tr(\Delta^2)\right)^{\frac{1}{2}}\left(\frac{1}{d}\tr(\mathbb{1}^2)\right)^{\frac{1}{2}} \leq \sqrt{8\lambda}, \notag \\
        \therefore a_2 &\leq \sqrt{8\lambda} - a_0 - a_1, \notag \\
        \therefore |a_2| &\leq \sqrt{8\lambda} + |a_0| + |a_1| \leq \sqrt{8}\sqrt{\lambda} + 16\lambda. \label{eqn:a2-bound}
    \end{align}
    
   Finally we bound each of $\frac{1}{d}\tr\left(\left(M_x - \tilde{M}_x\right)^2\right)$ in turn. Note $M_0 - \tilde{M}_0 = 0$ by construction. We have
    \begin{align*}
        \frac{1}{d}\tr\left(\left(M_1 - \tilde{M}_1\right)^2\right) &= \frac{1}{d}\sum_j \tr\left(\left(\begin{array}{cc} \cos 2\theta_j - \cos 2\hat\theta & \sin 2\theta_j - \sin 2\hat\theta \\ \sin 2 \theta_j - \sin2\hat\theta  & -\cos 2\theta_j + \cos2\hat\theta\end{array}\right)^2\right)\\
        &=\frac{1}{d} \sum_j (4 - 4\cos(2(\theta_j-\hat\theta))) \ = \ \frac{8}{d} \sum_j \sin^2(\theta_j - \hat\theta)
    \end{align*}
    To bound this, we note that on any $\h{H}_j$:
    $$\left(\begin{array}{cc} 1 & 0 \\ 0 & -1\end{array}\right)\left(\begin{array}{cc} \cos 2\theta_j & \sin 2\theta_j \\ \sin2\theta_j  & -\cos2\theta_j\end{array}\right) + \left(\begin{array}{cc} \cos 2\theta_j & \sin 2\theta_j \\ \sin2\theta_j  & -\cos2\theta_j\end{array}\right)\left(\begin{array}{cc} 1 & 0 \\ 0 & -1\end{array}\right) = 2\cos2\theta_j\cdot \mathbb{1}_{\h{H}_j}.$$
    From this we obtain
    $$\left[\left(\begin{array}{cc} 1 & 0 \\ 0 & -1\end{array}\right) + \left(\begin{array}{cc} \cos 2\theta_j & \sin 2\theta_j \\ \sin2\theta_j  & -\cos2\theta_j\end{array}\right)\right]^2 = 4\cos^2 \theta_j \mathbb{1}_{\h{H}_j}.$$
    Hence there exists a basis $\{\ket{\psi_0},\ket{\psi_1}\}$ of $\h{H}_j$ such that
    \begin{align*}
        (M_0 + M_1) \ket{\psi_0} &= 2\cos\theta_j \ket{\psi_0}\\
        (M_0 + M_1) \ket{\psi_1} &= -2\cos\theta_j \ket{\psi_1}.
    \end{align*}
    Therefore again from (\ref{eqn:Delta-squared}) we have
    \begin{align*}
        \bra{\psi_0} \Delta^2 \ket{\psi_0} &= 3 + 2\cos 2\theta_j + 4\cos\theta_j \bra{\psi_0} M_2 \ket{\psi_0}\\
        \bra{\psi_1} \Delta^2 \ket{\psi_1} &= 3 + 2\cos 2\theta_j - 4\cos\theta_j \bra{\psi_1} M_2 \ket{\psi_1}.
    \end{align*}
    In particular,
    $$\bra{\psi_0} \Delta^2 \ket{\psi_0} +  \bra{\psi_1} \Delta^2 \ket{\psi_1} \geq 6 + 4\cos 2\theta_j - 8|\cos{\theta_j}|.$$
    It is straightforward to show for $\theta \in \left[\frac{2\pi}{3} - \frac{\pi}{6}, \frac{2\pi}{3} + \frac{\pi}{6}\right]$ we have
    $$6 + 4\cos 2\theta - 8|\cos{\theta}| \geq 4\sin^2\left(\theta - \frac{2\pi}{3}\right).$$
    And hence we obtain the bound
    \begin{align*}
        \frac{1}{d}\tr(\Delta^2) &\geq \frac{1}{d}\sum_j (6 + 4\cos 2\theta_j - 8|\cos{\theta_j}|)\\
        &\geq \frac{1}{d}\sum_j 4\sin^2(\theta_j - \hat\theta) \ = \ \frac{1}{2d}\tr\left(\left(M_1 - \tilde{M}_1\right)^2\right).
    \end{align*}
    In particular, $\frac{1}{d}\tr\left(\left(M_1 - \tilde{M}_1\right)^2\right) \leq 16\lambda$.
 
    Finally, note $\tilde{M}_0 + \tilde{M}_1 + \tilde{M}_2 = -\mathbb{1}_{\h{L}_{00}} \oplus \mathbb{1}_{\h{L}_{01}} \oplus -\mathbb{1}_{\h{L}_{10}} \oplus \mathbb{1}_{\h{L}_{11}}$, and hence by Jensen's inequality
    \begin{align*}
        \frac{1}{d}\tr\left((M_2 - \tilde{M}_2)^2\right) &= \frac{1}{d}\tr\left((\Delta - (-\mathbb{1}_{\h{L}_{00}} \oplus \mathbb{1}_{\h{L}_{01}} \oplus -\mathbb{1}_{\h{L}_{10}} \oplus \mathbb{1}_{\h{L}_{11}}) + (\tilde{M_1} - M_1))^2\right)\\
        &\leq \frac{1}{d}\tr\left(\Delta^2\right) + \frac{1}{d}\tr\left(\mathbb{1}_{\h{L}}\right) + \frac{1}{d}\tr\left(\tilde{M_1} - M_1))^2\right)\\
        &\leq 32\lambda.
    \end{align*}

    Therefore, $\frac{1}{3} \sum_{x,y} \frac{1}{d}\tr\left(\left(E^x_y - \tilde{E}^x_y\right)^2\right) \leq 8\lambda$ as desired.
\end{proof}

 It is straightforward to the bound on the statistical difference to any synchronous quantum correlation close to $J_3 = -\frac{1}{8}$. We have every synchronous quantum correlation is a convex sum of synchronous quantum correlations with maximally entangled states (see Appendix \ref{appendix:synchronous}, Lemma 1). So we may write $p = \sum_j c_j p_j$ where $p_j$ is as in the theorem above. Say $J_3(p_j) \leq -\frac{1}{8} + \lambda_j$, and so $$J_3(p) = \sum_j c_j J_3(p_j) \leq -\frac{1}{8} + \sum_j c_j\lambda_j = -\frac{1}{8} + \lambda$$ where we define $\lambda = \sum_j c_j\lambda_j$. With two uses of Jensen's inequality,
    \begin{align*}
        \frac{1}{3}\sum_{x,y}\left| p(y,y\:|\:x,x) - \frac{1}{2}\right| &\leq \frac{1}{3}\sum_{j,x,y} c_j \left|p_j(y,y\:|\:x,x) - \frac{1}{2}\right|\\
        &\leq \sum_j c_j(C\sqrt{\lambda_j} + C'\lambda_j)\\
        &\leq C\sqrt{\lambda} + C'\lambda.
    \end{align*}
    
Unfortunately, this does not yet produce a fully device-independent protocol as we still suffer from a ``synchronicity'' loophole. If a device produces an entangled pair, and Alice and Bob make measurements according to (\ref{eqn:projections-exy}), they can achieve statistics according to (\ref{eqn:protocol-correlation}) if and only if that state is maximally entangled. The self-testing (or rigidity) property requires the protocol to be synchronous, and therefore the loophole is that there may be asynchronous protocols that can produce $J_3 = -\frac{1}{8}$ without using maximally entangled states. We close this loophole in the next section.

\end{section}

\begin{section}{Measure of asynchronicity}\label{section:measures}

That $J_3 = -\frac{1}{8}$ can be achieved by a unique synchronous quantum correlation, which necessarily can only be realized though a maximally entangled state, provides the device-independent security of the above QKD scheme. However this opens a ``synchronicity'' security loophole: can a (non-synchronous) quantum device simulate $J_3 = -\frac{1}{8}$ without using maximally entangled states (and hence potentially leak information about the derived shared keys)? Fortunately a recent work shows that the same results apply to ``almost'' synchronous correlations \cite{vidick2021almost}. This allows us to close this synchronicity loophole by also bounding the asynchronicity of the observed correlation.

\begin{definition}
The \emph{asynchronicity} of a correlation with respect to a basis choice $x\in X$ and set of measurement outcomes $Y$ is
$$S_x(p) = \sum_{y_A\not=y_B} p(y_A,y_B\:|\:x,x).$$
The \emph{total} (or \emph{expected}) asynchronicity is 
\begin{equation}
S(p) = \frac{1}{|X|} \sum_{x\in X} S_x(p) \label{eqn:total-asynchronicity}
\end{equation}
\end{definition}

In \cite{vidick2021almost}, this measure is called the ``default to synchronicity'' and denoted $\delta_{sync}$. As stated, the expected asynchronicity is the average likelihood of a non-synchronous result where the inputs are sampled uniformly at random. All results here, and in \cite{vidick2021almost}, apply to the expected asynchronicity where the expectation is computed over inputs sampled with respect to some other fixed distribution.

In order to statistically bound the asynchronicity, we modify Protocol A so that for some data rounds where Alice and Bob have selected the same inputs they still reveal their output. This results in our Protocol B, found as Algorithm \ref{alg:synchronousQKD-protocolB} below.

Here we state the main result \cite[Theorem 3.1]{vidick2021almost} in the notation used above. Note that this theorem refers to symmetric (albeit nonsynchronous) correlations, which is the natural setting as every synchronous quantum correlation is symmetric. This implies a special form for the projections in the correlation, involving the transpose with respect to the natural basis given by the Schmidt-decomposition of the entangled state used in the correlation.

\begin{theorem}[Vidick]
There are universal constants $c,C > 0$ such that the following holds. Let $X$ and $Y$ be finite sets and $p$ a symmetric quantum correlation with input set $X$, measurement results $Y$, and asynchronicity $S = S(p)$. Write
$$p(y_A,y_B \:|\: x_A,x_B) = \bra\psi E^{x_A}_{y_A}\otimes (E^{x_B}_{y_B})^T \ket\psi$$
where $\{E^x_y\}_{y\in Y}$ is a POVM on a finite-dimensional Hilbert space $\h{H}$ and $\ket\psi$ a state on $\h{H}\otimes\h{H}$. Let
$$\ket\psi = \sum_{j=1}^r \sqrt{\sigma_j} \sum_{m=1}^{d_j} \ket{\phi_{j,m}^A}\otimes \ket{\phi_{j,m}^B}$$
be the Schmidt decomposition, and write $\ket{\psi_j} = \frac{1}{\sqrt{d_j}} \sum_{m=1}^{d_j} \ket{\phi_{j,m}^A}\otimes \ket{\phi_{j,m}^B}$. Then
\begin{enumerate}
    \item $\h{H} = \bigoplus_{j=1}^r \h{H}_j$ with $\ket{\psi_j}$ being maximally entangled on $\h{H}_j\otimes \h{H}_j$;
    \item there is a projective measurement $\{E^{j,x}_y\}_{y\in Y}$ on each $\h{H}_j$ so that
    $$p_j(y_A,y_B\:|\: x_A, x_B) = \bra{\psi_j}E^{j,x_A}_{y_A}\otimes (E^{j,x_B}_{y_B})^T\ket{\psi_j} = \frac{1}{d_j}\tr(E^{j,x_A}_{y_A} E^{j,x_B}_{y_B})$$
    is a synchronous quantum correlation and $p \approx \sum_{j=1}^r d_j\sigma_j p_j$ in that:
    $$\frac{1}{|X|}\sum_{x\in X}\sum_{y\in Y}\sum_{j=1}^r  \frac{1}{d_j} \sum_{m=1}^{d_j} \bra{\phi_{j,m}^A} \left(E^x_y - E^{j,x}_y\right)^2 \ket{\phi_{j,m}^A} \leq C S^c.$$
\end{enumerate}
\end{theorem}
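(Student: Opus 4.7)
The plan is to adapt Vidick's argument from [vidick2021almost, Theorem 3.1] to our setting. The starting observation is that for a symmetric correlation, $S_x = 1 - \sum_y \langle\psi| E^x_y \otimes (E^x_y)^T |\psi\rangle$ controls how far the operators $E^x_y\otimes\mathbb{1}$ and $\mathbb{1}\otimes (E^x_y)^T$ can be from agreeing on $\ket\psi$. When the POVM is projective one has the exact identity
\[
2S_x = \sum_y \bigl\|(E^x_y\otimes\mathbb{1} - \mathbb{1}\otimes (E^x_y)^T)\ket\psi\bigr\|^2,
\]
and a one-sided inequality of the same flavor holds for general POVMs via Naimark dilation. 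This is the quantum analogue of saying the measurement nearly commutes with the partial-transpose involution that defines synchronicity.

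The second step is to use the Schmidt decomposition to isolate canonical blocks on which $\ket\psi$ restricts (up to $\sqrt{\sigma_j d_j}$) to the maximally entangled state $\ket{\psi_j}$. The $\h{H}_j = \mathrm{span}\{\ket{\phi_{j,m}^A}\}$ are the eigenspaces of the reduced density operator $\mathrm{tr}_B(\ket\psi\bra\psi)$. Projecting the operator inequality from the first step onto each block and weighting by $\sigma_j$, one shows that the $\sigma_j$-weighted block asynchronicities sum to $O(S)$. The essential point is that conjugation by the block projection $P_j$ commutes with the partial transpose on the natural Schmidt basis, so the compressed operators $P_j E^x_y P_j$ inherit ``almost-symmetry'' on $\ket{\psi_j}$.

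The third step is the rounding. On each block $\h{H}_j$, the compressed operators satisfy $\frac{1}{d_j}\mathrm{tr}(E^x_y E^x_{y'}) \approx \langle\psi_j| E^x_y\otimes (E^x_{y'})^T \ket{\psi_j}$, which is small for $y\neq y'$ by the block's share of $S$; in particular each $E^x_y|_{\h{H}_j}$ satisfies $\frac{1}{d_j}\mathrm{tr}((E^x_y)^2 - E^x_y) = O(S^c)$. From this one extracts an exact projective measurement $\{E^{j,x}_y\}_{y\in Y}$ by spectrally truncating $E^x_y|_{\h{H}_j}$ to $\{0,1\}$ and re-normalizing to preserve completeness $\sum_y E^{j,x}_y = \mathbb{1}_{\h{H}_j}$. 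The resulting correlation $p_j$ is synchronous by construction (it is tracial on the maximally entangled $\ket{\psi_j}$), and the stated $L^2$-bound follows block-by-block, weighted by $\sigma_j d_j$.

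The main obstacle is obtaining the quantitative $S^c$ dependence with a universal exponent $c<1$. A naive rounding bound would lose factors of $|X|$, $|Y|$, or $\dim\h{H}$; avoiding this dimensional loss is the technical core of [vidick2021almost] and requires an operator Chebyshev-type argument that balances the $L^2$-defect from projectivity against the re-normalization amplification. I would import this balancing lemma directly rather than reproving it, since it is independent of our particular $|X|=3$, $|Y|=2$ structure, and since the constants it produces are exactly the universal $c$ and $C$ in the conclusion.
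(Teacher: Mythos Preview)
The paper does not prove this theorem; it is stated verbatim as \cite[Theorem~3.1]{vidick2021almost} and used as a black box. So there is no ``paper's own proof'' to compare against---your sketch already goes further than the paper does, which simply imports the result.

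That said, one remark on your outline: in the last step you say the $L^2$-bound ``follows block-by-block, weighted by $\sigma_j d_j$,'' but the bound as stated in the theorem has no $\sigma_j d_j$ weight---it is a plain sum over $j$ of the normalized traces $\frac{1}{d_j}\sum_m\langle\phi^A_{j,m}|\cdot|\phi^A_{j,m}\rangle$. Getting an \emph{unweighted} sum over Schmidt blocks bounded by a function of $S$ alone is precisely where the dimension-free argument of \cite{vidick2021almost} is needed; a $\sigma_j d_j$-weighted bound would be the ``naive'' estimate and would not suffice for the later corollary in the paper, which separately inserts the weights. Your plan to import Vidick's balancing lemma is therefore the right call, but be careful that what you are importing is the stronger unweighted statement.
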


As indicated in \cite[\S{4.1}]{vidick2021almost}, this result can be used to transfer rigidity from synchronous to almost synchronous correlations. As $\sum_{j} d_j\sigma_j = 1$, we and transfer the bound on the statistical difference from uniform to convex sums in this theorem exactly as in the previous section. As for the full correlation we rephrase Lemma 2.10 of \cite{vidick2021almost} in the context of the Theorem as follows.

\begin{corollary}[Vidick]
    Let $p(y_A,y_B \:|\: x_A,x_B) = \bra\psi E^{x_A}_{y_A}\otimes (E^{x_B}_{y_B})^T \ket\psi$ be a quantum correlation with asynchronocity $S$ as in the Theorem, and let $\bar{p} = \sum_{j=1}^r d_j\sigma_j p_j$ with
    $$\frac{1}{|X|}\sum_{x\in X}\sum_{y\in Y}\sum_{j=1}^r  \frac{1}{d_j} \sum_{m=1}^{d_j} \bra{\phi_{j,m}^A} \left(E^x_y - E^{j,x}_y\right)^2 \ket{\phi_{j,m}^A} = \gamma$$
    as given in the Theorem. Then
    $$\frac{1}{|X|^2}\sum_{x_A,x_B,y_A,y_B} | p(y_A,y_B \:|\: x_A,x_B) - \bar{p}(y_A,y_B \:|\: x_A,x_B) | \leq 3S + 4\sqrt{\gamma}.$$
\end{corollary}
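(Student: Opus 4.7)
The plan is to prove the bound via the triangle inequality through a ``block-diagonalized'' intermediate correlation
\begin{equation*}
\tilde{p}(y_A,y_B|x_A,x_B) \;=\; \sum_{j=1}^r \sigma_j d_j \, \bra{\psi_j} E^{x_A}_{y_A}\otimes (E^{x_B}_{y_B})^T \ket{\psi_j},
\end{equation*}
which uses the \emph{original} device measurements $\{E^x_y\}$ but replaces the pure state $\ket\psi=\sum_j\sqrt{\sigma_j d_j}\ket{\psi_j}$ by its dephased mixture $\rho_{\text{diag}}=\sum_j \sigma_j d_j \ket{\psi_j}\bra{\psi_j}$. The natural splitting is then $p - \bar p = (p - \tilde p) + (\tilde p - \bar p)$: the first difference isolates the off-diagonal Schmidt coherences (controlled by $S$), and the second isolates the within-block measurement closeness (controlled by $\gamma$).

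For $|p-\tilde p|$, the difference consists exactly of the cross-terms $\sqrt{\sigma_j d_j \sigma_{j'} d_{j'}}\bra{\psi_j}E^{x_A}_{y_A}\otimes (E^{x_B}_{y_B})^T\ket{\psi_{j'}}$ with $j\neq j'$. The observation that powers the argument is that when $x_A=x_B$ and $y_A\neq y_B$ the diagonal piece $\tilde p$ vanishes by synchronicity of its maximally-entangled block constituents, so the entire asynchronicity $S$ of $p$ is carried precisely by these off-diagonal Schmidt coherences. Vidick's Lemma 2.10 converts this observation into a trace-distance bound $\|\ket\psi\bra\psi - \rho_{\text{diag}}\|_1 \leq 3S$ through a positivity-and-symmetry argument, from which the standard POVM inequality $\sum_{y_A,y_B}|\tr((E^{x_A}_{y_A}\otimes (E^{x_B}_{y_B})^T)(\rho - \rho'))| \leq \|\rho-\rho'\|_1$ and summing over $x_A,x_B$ produces the $3|X|^2 S$ bound on the unnormalized $L^1$ distance.

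For $|\tilde p-\bar p|$, the maximally-entangled identity $\bra{\psi_j}A\otimes B^T\ket{\psi_j}=\frac{1}{d_j}\sum_n\bra{\phi^A_{j,n}}AB\ket{\phi^A_{j,n}}$ collapses both correlations to sums of the form $\sum_j \sigma_j \sum_n \bra{\phi^A_{j,n}}(\,\cdot\,)\ket{\phi^A_{j,n}}$. Telescoping
\begin{equation*}
E^{x_A}_{y_A}E^{x_B}_{y_B} - E^{j,x_A}_{y_A}E^{j,x_B}_{y_B} \;=\; (E^{x_A}_{y_A}-E^{j,x_A}_{y_A})E^{x_B}_{y_B} + E^{j,x_A}_{y_A}(E^{x_B}_{y_B}-E^{j,x_B}_{y_B})
\end{equation*}
and applying Cauchy--Schwarz twice --- once inside each inner product, and once across the outer sum over $x,y,j$, with $\sum_y E^x_y=\mathbb{1}$ absorbing the ``spectator'' POVM factor --- yields the $4|X|^2\sqrt\gamma$ bound after invoking the hypothesis defining $\gamma$. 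Dividing both bounds by $|X|^2$ and combining with the triangle inequality gives the corollary.

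The main obstacle I anticipate is the first step. The individual off-diagonal inner products $\bra{\psi_j}E^{x_A}_{y_A}\otimes (E^{x_B}_{y_B})^T\ket{\psi_{j'}}$ are \emph{not} controlled term-by-term by $S$; only their collective pattern is, and extracting this requires the subtle positivity structure of the full correlation used in \cite{vidick2021almost}. Since the corollary is essentially a rephrasing of Lemma 2.10 of \cite{vidick2021almost} in the symmetric Schmidt-decomposed form adopted here, most of the work in the first step is faithfully transcribing that argument rather than inventing a new one. The second step, by contrast, is a routine Cauchy--Schwarz manipulation once the defining expression for $\gamma$ is recognized as the relevant weighted Hilbert--Schmidt norm on the differences $E^x_y - E^{j,x}_y$.
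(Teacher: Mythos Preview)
The paper does not supply its own proof of this corollary: it is presented explicitly as a rephrasing of Lemma~2.10 of \cite{vidick2021almost}, with no argument given. So there is no in-paper proof to compare against, and your sketch must be assessed on its own. The two-step triangle inequality through the dephased intermediate $\tilde p$ is the natural skeleton, and the Cauchy--Schwarz treatment of $|\tilde p-\bar p|$ is right in spirit (modulo the correction below).

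The first step, however, rests on a false claim. Your ``observation that powers the argument''---that $\tilde p(y_A,y_B\mid x,x)=0$ whenever $y_A\neq y_B$---does not hold. With $P_j=\sum_m\ket{\phi_{j,m}}\bra{\phi_{j,m}}$ the projection onto $\mathfrak H_j$, the block term is
\[
\bra{\psi_j}\,E^x_{y_A}\otimes(E^x_{y_B})^T\,\ket{\psi_j}\;=\;\tfrac{1}{d_j}\,\tr\!\bigl(P_j\,E^x_{y_A}\,P_j\,E^x_{y_B}\bigr),
\]
and the intermediate $P_j$ prevents this from vanishing even when $E^x_{y_A}E^x_{y_B}=0$. (The same omission makes your ``maximally-entangled identity'' $\bra{\psi_j}A\otimes B^T\ket{\psi_j}=\tfrac{1}{d_j}\sum_n\bra{\phi_{j,n}}AB\ket{\phi_{j,n}}$ incorrect for operators $A,B$ on the full space $\mathfrak H$; the right-hand side must read $\tfrac{1}{d_j}\tr(P_jAP_jB)$.) Hence the asynchronicity of $p$ is \emph{not} carried solely by the off-diagonal Schmidt coherences, and the state-level bound $\|\ket\psi\bra\psi-\rho_{\mathrm{diag}}\|_1\le 3S$ you attribute to Vidick cannot hold: take all $E^x_y$ diagonal in the Schmidt basis so that $S=0$, while $\|\ket\psi\bra\psi-\rho_{\mathrm{diag}}\|_1$ depends only on the $\sigma_j d_j$ and is generically positive. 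Finally, since the paper identifies this very corollary with Lemma~2.10, invoking that lemma as a black-box ingredient is circular. The genuine $3S$ mechanism controls the off-block leakage of the \emph{measurements} against the Schmidt-weighted state, not the coherences of the state alone, and your sketch does not yet capture it.
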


Note that this bound on the statistical difference directly bounds $J_3(p)$ in terms of the convex sum of the analogous $J_3(p_j)$. Note that $J_3$, as seen in (\ref{eqn:effective-J3}), is an affine function so $J_3(\bar{p}) = \sum_{j=1}^r \sigma_jd_j J_3(p_j)$ using the notation of the Theorem above. Then immediately from the Corollary, $|J_3(p) - J_3(\bar{p})| \leq \frac{27}{4}S + 9\sqrt{\gamma}$. In turn from the Theorem $\gamma \leq CS^c$, and so there are different universal constants $C',c'$ so that
\begin{equation}\label{eqn:J3-bound}
    |J_3(p) - J_3(\bar{p})| \leq C'S^{c'}.
\end{equation}

\begin{corollary}
    Let $p(y_A,y_B \:|\: x_A,x_B) = \bra\psi E^{x_A}_{y_A}\otimes (E^{x_B}_{y_B})^T \ket\psi$ be a quantum correlation as in the Theorem and suppose $J_3(p) = -\frac{1}{8} + \lambda$. Then the Hilbert space decomposes as $\mathfrak{H} = \bigoplus_{j=1}^r \mathfrak{H}_j = \bigoplus_{j=1}^r (\mathfrak{L}_j \oplus (\mathbb{C}^2 \otimes \mathfrak{K}_j))$ where $\frac{\mathrm{dim}\mathfrak{L_j}}{\mathrm{dim}\mathfrak{H_j}} \leq 8\lambda_j$. On each summand we have projection-valued measures $\{\tilde{E}^{j,x}_y\}$ such that $\tilde{E}^{j,x}_y = L^{j,x}_y + \hat{E}^x_y \otimes \mathbb{1}_{\mathfrak{K}_j}$ and
    $$\frac{1}{3}\sum_{x,y} \sum_{j=1}^{r} \sigma_j d_j \left(\frac{1}{d_j}\sum_{m=1}^{d_j}\bra{\phi^A_{j,m}}(E^{x}_y - \tilde{E}^{j,x}_y)^2\ket{\phi^A_{j,m}}\right) \leq C_1S^{c} + C_2\lambda$$
    for universal constants $c,C_1,C_2$.
\end{corollary}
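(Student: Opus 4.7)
The plan is to chain together three results already proved in the excerpt: Vidick's almost-synchronous theorem, the $J_3$-stability estimate \eqref{eqn:J3-bound}, and the rigidity Theorem~\ref{theorem:main_bound} for synchronous correlations. First I would apply Vidick's theorem to the correlation $p$, which yields the orthogonal decomposition $\mathfrak{H} = \bigoplus_{j=1}^r \mathfrak{H}_j$, the synchronous quantum correlations $p_j(y_A,y_B\:|\:x_A,x_B) = \frac{1}{d_j}\tr(E^{j,x_A}_{y_A}E^{j,x_B}_{y_B})$ on each $\mathfrak{H}_j$, and the almost-synchronous approximation bound
$$\frac{1}{3}\sum_{x,y}\sum_{j=1}^r \frac{1}{d_j}\sum_{m=1}^{d_j}\bra{\phi^A_{j,m}}(E^x_y - E^{j,x}_y)^2\ket{\phi^A_{j,m}} \leq CS^c.$$

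Next I would use \eqref{eqn:J3-bound} together with affineness of $J_3$ to pass from the hypothesis on $p$ to control on the individual $\lambda_j$. Writing $\bar p = \sum_j \sigma_j d_j p_j$, affineness gives $J_3(\bar p) = \sum_j \sigma_j d_j J_3(p_j)$. Setting $J_3(p_j) = -\frac{1}{8} + \lambda_j$ with $\lambda_j \geq 0$ (the synchronous Tsirl'son bound), \eqref{eqn:J3-bound} yields
$$\sum_j \sigma_j d_j \lambda_j = J_3(\bar p) + \tfrac{1}{8} \leq J_3(p) + C'S^{c'} + \tfrac{1}{8} \leq \lambda + C'S^{c'}.$$
This is the key quantitative output that lets the budget from $\lambda$ be redistributed amongst the summands $p_j$.

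Now I would apply Theorem~\ref{theorem:main_bound} to each synchronous $p_j$ with its own $\lambda_j$. This produces the stated subdecomposition $\mathfrak{H}_j = \mathfrak{L}_j \oplus (\mathbb{C}^2 \otimes \mathfrak{K}_j)$ together with projection-valued measures $\tilde E^{j,x}_y = L^{j,x}_y + \hat E^x_y \otimes \mathbb{1}_{\mathfrak{K}_j}$ satisfying
$$\frac{1}{3}\sum_{x,y}\frac{1}{d_j}\tr\bigl((E^{j,x}_y - \tilde E^{j,x}_y)^2\bigr) \leq \tfrac{52}{3}\lambda_j,$$
and $\dim\mathfrak{L}_j/d_j \leq \tfrac{32}{3}\lambda_j$. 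Since $\frac{1}{d_j}\tr(\cdot) = \frac{1}{d_j}\sum_m \bra{\phi^A_{j,m}}\cdot\ket{\phi^A_{j,m}}$ on $\mathfrak{H}_j$, multiplying by $\sigma_j d_j$ and summing gives $\sum_j \sigma_j d_j \cdot \tfrac{52}{3}\lambda_j \leq \tfrac{52}{3}(\lambda + C'S^{c'})$.

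The final step is to combine the Vidick error and the rigidity error via the operator triangle inequality $(A+B)^2 \leq 2A^2 + 2B^2$ (for self-adjoint $A,B$, which follows from $(A-B)^2 \geq 0$), applied to $A = E^x_y - E^{j,x}_y$ and $B = E^{j,x}_y - \tilde E^{j,x}_y$. Taking expectations against $\ket{\phi^A_{j,m}}$, weighting by $\sigma_j d_j$ (using $\sigma_j d_j \leq 1$ to drop these weights where helpful since $\sum_j \sigma_j d_j = 1$), and averaging over $(x,y)$ produces a bound of the form $2CS^c + \tfrac{104}{3}(\lambda + C'S^{c'})$, which collapses to $C_1 S^c + C_2 \lambda$ after relabeling constants and noting that the smaller of $c, c'$ controls the $S$ term. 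The main bookkeeping obstacle is precisely this last combination: tracking that the Vidick bound carries the unweighted sum $\sum_j$ while the rigidity contribution carries $\sum_j \sigma_j d_j$, and justifying that both can be absorbed into the claimed inequality without losing the linearity in $\lambda$.
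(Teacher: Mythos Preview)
Your proposal is correct and follows essentially the same route as the paper: apply Vidick's theorem to obtain the synchronous $p_j$, use affineness of $J_3$ together with \eqref{eqn:J3-bound} to bound $\sum_j \sigma_j d_j \lambda_j$ by $\lambda + C'S^{c'}$, apply Theorem~\ref{theorem:main_bound} on each block, and then combine the two error terms via $(A+B)^2 \leq 2A^2 + 2B^2$ (what the paper calls ``Jensen's inequality''), using $\sigma_j d_j \leq 1$ to pass from the weighted to the unweighted sum for the Vidick contribution. The bookkeeping you flag at the end is exactly the one place the paper also takes care to separate the weighted and unweighted sums before merging them.
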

\begin{proof}
    Given $\{E^x_y\}$ as above, we obtain projections $\{E^{j,x}_y\}$ defining synchronous correlations $p_j$ from the Theorem. Write $J_3(p_j) = -\frac{1}{8} + \lambda_j$. From Theorem \ref{theorem:main_bound}, we obtain the given decomposition of the Hilbert space and projection-valued measures $\{\tilde{E}^{j,x}_y\}$ where 
    \begin{enumerate}
        \item $\tilde{E}^{j,x}_y = L^{j,x}_y + \hat{E}^x_j\otimes \mathbb{1}_{\mathfrak{K}_j}$,
        \item $\frac{\dim{\mathfrak{L}_j}}{\dim{\mathfrak{H}_j}} \leq 8 \lambda_j$, and
        \item $\frac{1}{3}\sum_{x,y}\frac{1}{d_j}\sum_{m=1}^{d_j}\bra{\phi^A_{j,m}}(E^{j,x}_y - \tilde{E}^{j,x}_y)^2\ket{\phi^A_{j,m}} \leq C_2\lambda_j$.
    \end{enumerate}  
    Then using the notation and (\ref{eqn:J3-bound}) above $|J_3(p) - J_3(\bar{p})| = \left|\lambda - \sum_{j=1}^r \sigma_j d_j \lambda_j\right| \leq C'S^{c'}$ and thus
    \begin{align*}
        &\frac{1}{3}\sum_{x,y} \sum_{j=1}^{r}\sigma_j d_j \left(\frac{1}{d_j}\sum_{m=1}^{d_j}\bra{\phi^A_{j,m}}(E^{j,x}_y - \tilde{E}^{j,x}_y)^2\ket{\phi^A_{j,m}}\right)\\
        &\qquad \leq C_2 \sum_{j=1}^r\sigma_j d_j \lambda_j = C_2\lambda + C_2C'S^{c'}.
    \end{align*}
    
    On the other hand,
    \begin{align*}
        &\frac{1}{3}\sum_{x,y} \sum_{j=1}^{r} \sigma_j d_j \left(\frac{1}{d_j}\sum_{m=1}^{d_j}\bra{\phi^A_{j,m}}(E^{x}_y - E^{j,x}_y)^2\ket{\phi^A_{j,m}}\right)\\
        &\qquad \leq \frac{1}{3}\sum_{x,y} \sum_{j=1}^{r} \left(\frac{1}{d_j}\sum_{m=1}^{d_j}\bra{\phi^A_{j,m}}(E^{x}_y - E^{j,x}_y)^2\ket{\phi^A_{j,m}}\right) \leq C'' S^{c''}
    \end{align*}
    directly from the Theorem. So by Jensen's inequality
    \begin{align*}
        &\frac{1}{3}\sum_{x,y} \sum_{j=1}^{r} \sigma_j d_j \left(\frac{1}{d_j}\sum_{m=1}^{d_j}\bra{\phi^A_{j,m}}(E^{x}_y - \tilde{E}^{j,x}_y)^2\ket{\phi^A_{j,m}}\right)\\
        &\qquad \leq \frac{2}{3}\sum_{x,y} \sum_{j=1}^{r} \sigma_j d_j \left(\frac{1}{d_j}\sum_{m=1}^{d_j}\bra{\phi^A_{j,m}}(E^{x}_y - E^{j,x}_y)^2\ket{\phi^A_{j,m}}\right)\\
        &\qquad\qquad +\ \frac{2}{3}\sum_{x,y} \sum_{j=1}^{r}\sigma_j d_j \left(\frac{1}{d_j}\sum_{m=1}^{d_j}\bra{\phi^A_{j,m}}(E^{j,x}_y - \tilde{E}^{j,x}_y)^2\ket{\phi^A_{j,m}}\right)\\
        &\qquad\leq 2C_1S^c + 2C_2\lambda
    \end{align*}
    for some universal constant $C_1$.
\end{proof}

\begin{table}[t]
    \begin{algorithm}[H]\label{alg:synchronousQKD-protocolB}
        \caption{Protocol B}
        \SetAlgoLined
        \DontPrintSemicolon
        \SetKwFor{For}{for}{do}{}
        \KwIn{$\lambda, \mu, n, m$}
        $X \leftarrow \{0,1,2\}$ and $Y \leftarrow \{0,1\}$\\
        Alice and Bob share $n$ EPR pairs: $\ket{\psi} = \frac{1}{\sqrt{2}}\left(\ket{00} + \ket{11}\right)$\\
        Alice and Bob both have available three particular measurement bases $\{E^x_y\}_{x\in X, y\in Y}$\\
        \For{$i = 1, \cdots, n$}{
            Alice draws $x^i_A \overset{\$}{\leftarrow} X$ and Bob draws $x^i_B \overset{\$}{\leftarrow} X$\\
            With the $i^\text{th}$ EPR pair, Alice obtains $y^i_A$ using  $\{E^{x^i_A}_y\}$ and Bob obtains $y^i_B$ (using $\{E^{x^i_B}_y\}$)
        }
        Alice and Bob exchange their choices of $x^i_A, x^i_B$, for $i \in [n]$\\
        Whenever $x^i_A \neq x^i_B$, or when $x^i_A = x^i_B$ and $i=0 \pmod{m}$, Alice and Bob exchange $y^i_A, y^i_B$\\
        $k \leftarrow \emptyset$\\
        \For {$i = 1,\cdots, n$}{
            \uIf{$x^i_A = x^i_B$}{
                \uIf{$i = 0 \pmod{m}$}{
                    Add result to estimation of $S$ 
                }
                \lElse{
                    $k \leftarrow k \cup y^i$, where $y^i := y^i_A = y^i_B$ due to synchronicity
                }
            }
            \lElse{
                Add result to estimation of $J_3$
            }            
        }
        Compute an estimate $\hat{J}_3$ using (\ref{eqn:effective-J3})\\
        Compute and estimate $\hat{S}$ using (\ref{eqn:total-asynchronicity})\\
        \uIf{ $|\hat{J}_3 + \frac{1}{8}| \leq \lambda$ and $S \leq \mu$}{
            \textbf{Return } $k$ (for standard information reconciliation and privacy amplification)
        }\lElse{ Abort }
    \end{algorithm}
    \end{table}

\end{section}

\begin{section}{Causality Loophole}\label{section:causality-loophole}
    In this section we describe what is called the \emph{causality} or \emph{locality} loophole common to device independent quantum key distribution protocols that use non-local games, and propose a solution to the loophole using a new security assumption.
    
    As seen in the previous section, the bound for the Bell inequality $ J_3 \geq -\frac{1}{8}$ is sharp and rigid only among synchronous quantum correlations. There exist more powerful synchronous non-signaling strategies that violate those bounds. Furthermore, if classical communication is allowed between the parties in the protocol, even greater violations can be achieved. This is the \emph{causality loophole}: unless Alice and Bob are acausally separated, then the statistics for the synchronous Bell inequalities can simply be simulated using classical communication.
    
    In order to resolve the causality loophole in our protocol, we pose a new security assumption: Instead of limiting Eve's computational power or limiting the communication she can perform, we assume that she has imperfect knowledge of the basis Alice and Bob use in the protocol. We state this more formally:
    
    \noindent Let $\epsilon$ be Eve's uncertainty about Alice and Bob's inputs. Without loss of generality, we assume that her uncertainty is symmetric across all basis selections. For $x', x \in \{0,1,2\}$ we have
    
    \[
        \Pr\{\text{Eve guesses basis } x' \given \text{Alice (or Bob) selects basis } x\} = 
            \left\{
                \begin{array}{cl} 
                    1 - \epsilon & \text{ when $x' = x$}\\\\ 
                    \frac{\epsilon}{2} & \text{ when $x' \not= x$.}
                \end{array}
            \right.
    \]
    
    We denote Eve's guess for Alice's input by $z_A$ and for Bob's input by $z_B$. Eve has unlimited computational power and communication and can use any strategy of her choosing to produce outputs $(y_A, y_B)$. We denote her correlation as $\Pr\{(y_A ,y_B \given z_A, z_B)\}$. The correlation that Alice and Bob use to compute key bits and self-test their devices is then given by:
    
    \begin{align*}
        p(y_A,y_B &\given x_A,x_B) = \\
        &\sum_{z_A,z_B} \Pr\{ (y_A,y_B) \given (z_A,z_B)\}\cdot
        \left\{\begin{array}{cl} 1 - \epsilon & \text{for $z_A = x_A$} \\\\ \frac{\epsilon}{2} & \text{otherwise}\end{array}\right\}\cdot
        \left\{\begin{array}{cl} 1 - \epsilon & \text{for $z_B = x_B$} \\\\ \frac{\epsilon}{2} & \text{otherwise}\end{array}\right\}.
    \end{align*}

\begin{theorem}
    Let $0\leq \epsilon \leq \frac{2}{3}$ be Eve's uncertainty. Let $0 \leq \lambda \leq \frac{1}{8}$ and $0 \leq \mu \leq \mu_0$ be allowed errors in expected values for Alice and Bob's Bell term $J_3$ and asynchronicity $S$ respectively. We write analogous terms $\tilde{J}_3$ and $\tilde{S}$ for Eve's strategy. Let
    $$\epsilon_{max} = \frac{2}{3} - \frac{2}{3}\left(\frac{\sqrt{64\lambda^2 + 6(8\lambda - 9)\mu - 72\mu^2 - 144\lambda + 81}}{6\mu - 8\lambda  + 9}\right).$$ If Eve's uncertainty is $\epsilon > \epsilon_{max}$ then every correlation satisfies $\tilde{S} < 0$, and hence there is no feasible strategy she can produce. Said another way, all feasible strategies only exist for $\epsilon \in [0, \epsilon_{max}]$.\\
    
    \noindent Furthermore, if Eve's asynchronicity is bounded below by $\delta$ i.e. $ 0 \leq \delta \leq \tilde{S} \leq \mu$, then the maximum uncertainty she could have before her asynchronicity $S > \delta$ is $$\epsilon^{\delta}_{max} = \frac{2}{3} - \frac{2}{3}\left(\frac{\sqrt{144(\delta - 1)\lambda + 64\lambda^2 + 6(36\delta + 8\lambda - 9)\mu - 72\mu^2 - 162\delta + 81}}{6\mu - 18\delta - 8\lambda + 9}\right)$$ Note that $0 \leq \epsilon^\delta_{max} \leq \epsilon_{max}$.
\end{theorem}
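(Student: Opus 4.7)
The plan is to express the observed Bell term $J_3(p)$ and asynchronicity $S(p)$ as affine functions of Eve's underlying quantities $\tilde{J}_3$ and $\tilde{S}$, invert to obtain $\tilde{S}$ explicitly, and then determine the range of $\epsilon$ for which $\tilde{S} \geq 0$ (respectively $\tilde{S} \geq \delta$) is compatible with the observational constraints $|J_3(p) + \tfrac{1}{8}| \leq \lambda$ and $S(p) \leq \mu$.

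First I would introduce the pairwise disagreement $D_{x_A,x_B}(p) = \sum_{y_A \not= y_B} p(y_A,y_B\given x_A,x_B)$, so that $S(p) = \tfrac{1}{3}\sum_x D_{x,x}(p)$ and $J_3(p) = 1 - \tfrac{1}{4}\sum_{x_A\not= x_B} D_{x_A,x_B}(p)$, with analogous $\tilde{D}_{z_A,z_B}$ for Eve. Because $\sum_x \Pr(z \given x) = 1$, the total disagreement $\sum_{x_A,x_B} D_{x_A,x_B}$ is preserved under the noise channel; only the split between diagonal and off-diagonal terms depends on $\epsilon$. Setting $a = (1-\epsilon)^2 + \tfrac{\epsilon^2}{2}$ and $b = \epsilon(1-\epsilon) + \tfrac{\epsilon^2}{4}$, a direct computation yields
\begin{equation*}
S(p) = a\,\tilde{S} + \tfrac{4b}{3}(1 - \tilde{J}_3), \qquad J_3(p) = \tilde{J}_3 - \tfrac{3}{4}\tilde{S} + \tfrac{3}{4}S(p).
\end{equation*}
Eliminating $\tilde{J}_3$ gives $\tilde{S} = \bigl(4b\,J_3(p) + 3(1-b)\,S(p) - 4b\bigr)/\bigl(3(a - b)\bigr)$. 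The two algebraic identities $a - b = (1 - \tfrac{3\epsilon}{2})^2 = 1 - 3b$ then drive all subsequent simplifications.

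Since $b, 1-b \geq 0$ on $[0, 2/3]$ and $a - b > 0$ for $\epsilon < 2/3$, the numerator above is monotonically increasing in both $J_3(p)$ and $S(p)$, so the largest feasible $\tilde{S}$ subject to the observational constraints is attained at $J_3(p) = -\tfrac{1}{8} + \lambda$ and $S(p) = \mu$. Substituting and imposing $\tilde{S} \geq 0$ collapses to $b \leq \tfrac{6\mu}{9 - 8\lambda + 6\mu}$. The function $b(\epsilon) = \epsilon - \tfrac{3\epsilon^2}{4}$ is strictly increasing on $[0, 2/3]$, so solving the resulting quadratic and selecting the root in this range yields $\epsilon_{max} = \tfrac{2}{3}\bigl(1 - \sqrt{1 - 3c}\bigr)$ with $c = \tfrac{6\mu}{9 - 8\lambda + 6\mu}$. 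Writing $1 - 3c = \tfrac{(9 - 8\lambda - 12\mu)(9 - 8\lambda + 6\mu)}{(9 - 8\lambda + 6\mu)^2}$ and expanding the product in the numerator recovers the stated closed form. For the refined bound, replacing $\tilde{S} \geq 0$ by $\tilde{S} \geq \delta$ and clearing denominators using $a - b = 1 - 3b$ gives the analogous inequality $b \leq \tfrac{6(\mu - \delta)}{9 - 8\lambda + 6\mu - 18\delta}$; repeating the quadratic inversion produces $\epsilon^{\delta}_{max}$.

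The principal obstacle will be purely algebraic bookkeeping: recognizing the twin factorizations $a - b = (1 - \tfrac{3\epsilon}{2})^2 = 1 - 3b$, without which the extremal $\epsilon$ would not admit a tractable closed form. The hypothesis $\mu \leq \mu_0$ is needed only to keep $1 - 3c$ (and its analogue for $c' = \tfrac{6(\mu-\delta)}{9-8\lambda+6\mu-18\delta}$) non-negative so that $\epsilon_{max}$ and $\epsilon^{\delta}_{max}$ are real; the inequality $\epsilon^{\delta}_{max} \leq \epsilon_{max}$ then follows from the fact that demanding $\tilde{S} \geq \delta$ is a stricter condition than $\tilde{S} \geq 0$, together with the monotonicity of $b(\epsilon)$.
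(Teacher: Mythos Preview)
Your proposal is correct and follows essentially the same route as the paper: both set up the affine relation between the observed pair $(1-J_3,\,S)$ and Eve's pair $(1-\tilde{J}_3,\,\tilde{S})$, invert, substitute $J_3=-\tfrac18+\lambda$, $S=\mu$, and solve the resulting quadratic in $\epsilon$ at $\tilde{S}=0$ (resp.\ $\tilde{S}=\delta$). Your presentation is somewhat cleaner---using the doubly-stochastic observation to get $J_3(p)-\tilde{J}_3=\tfrac34\bigl(S(p)-\tilde{S}\bigr)$ directly, the parametrization $a,b$ with the identity $a-b=(1-\tfrac{3\epsilon}{2})^2=1-3b$, and the monotonicity argument justifying the extremal choice of $J_3(p)$ and $S(p)$---but the underlying argument is the same as the paper's matrix inversion.
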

\begin{proof}
    We begin by deriving expressions for the expected values of $J_3$ and $S$.
   \begin{align}
    \langle 1 - J_3 \rangle &= \frac{1}{4}\big( p(0,1\:|\:0,1) + p(1,0\:|\:0,1) + p(0,1\:|\:1,0) + p(1,0\:|\:1,0) \notag\\
        &\quad +\ p(0,1\:|\:0,2) + p(1,0\:|\:0,2) + p(0,1\:|\:2,0) + p(1,0\:|\:2,0) \notag \\
        &\quad +\ p(0,1\:|\:1,2) + p(1,0\:|\:1,2) + p(0,1\:|\:2,1) + p(1,0\:|\:2,1)\big) \notag\\
    &=  \left(1 - \epsilon + \tfrac{3}{4}\epsilon^2\right)(1-\tilde{J}_3) + \left(\tfrac{3}{2}\epsilon - \tfrac{9}{8}\epsilon^2\right)\tilde{S} \label{eqn:expected_oneminusJ3}
    \end{align}
    A similar computation for $S$ gives us:
    \begin{align}
        \langle S \rangle &= \frac{1}{3}\big(p(0,1\:|\:0,0) + p(1,0\:|\:0,0) + p(0,1\:|\:1,1) \notag\\
        & \qquad + p(1,0\:|\:1,1) + p(0,1\:|\:2,2) + p(1,0\:|\:2,2)\big) \notag \\
        &= \left(1 - 2\epsilon + \tfrac{3}{2}\epsilon^2\right)\tilde{S} + \left(\tfrac{4}{3}\epsilon - \epsilon^2\right)(1-\tilde{J}_3) \label{eqn:expectedS}
    \end{align}
    Using (\ref{eqn:expected_oneminusJ3}) and (\ref{eqn:expectedS}), we can solve for $\tilde{J}_3$ and $\tilde{S}$ as:
    \begin{align*}
        \begin{bmatrix}
            1 - \tilde{J}_3 \\
            \tilde{S}
        \end{bmatrix}
        &=
        \begin{bmatrix*}[r]
            1 - \epsilon + \tfrac{3}{4}\epsilon^2 & \tfrac{3}{2}\epsilon - \tfrac{9}{8}\epsilon^2 \vspace{0.2cm} \\
            \tfrac{4}{3}\epsilon - \epsilon^2 & 1 - 2\epsilon + \tfrac{3}{2}\epsilon^2
        \end{bmatrix*}^{-1}
        \begin{bmatrix}
            \frac{9}{8} - \lambda \\
            \mu
        \end{bmatrix}
    \end{align*}
    We get solutions:
    \begin{align}
        \label{eqn:eve's-J3} \tilde{J}_3 &= 1 - \frac{ (3\epsilon^2 - 4\epsilon)(6\mu  - 8\lambda + 9) - 16\lambda + 18}{4\left(3\epsilon - 2\right)^{2}} = \frac{ (3\epsilon^2 - 4\epsilon)(3 - 6\mu  + 8\lambda) + 16\lambda - 2}{4\left(3\epsilon - 2\right)^{2}}\\
        \label{eqn:eve's-S} \tilde{S} &= \frac{(3\epsilon^2 - 4\epsilon)(6\mu - 8\lambda + 9) + 24\mu}{6\left(3\epsilon - 2\right)^{2}}.
    \end{align}

\noindent Plugging $\tilde{S} = \delta$ in (\ref{eqn:eve's-S}), and solving for $\epsilon$ gives us:
\begin{align*}
    \epsilon^{\delta}_{max} = \frac{2}{3} - \frac{2}{3}\left(\frac{\sqrt{144(\delta - 1)\lambda + 64\lambda^2 + 6(36\delta + 8\lambda - 9)\mu - 72\mu^2 - 162\delta + 81}}{6\mu - 18\delta - 8\lambda + 9}\right)
\end{align*}
For $\delta = 0$, we get 
\begin{align*}
    \epsilon^0_{max} = \epsilon_{max} = \frac{2}{3} - \frac{2}{3}\left(\frac{\sqrt{64\lambda^2 + 6(8\lambda - 9)\mu - 72\mu^2 - 144\lambda + 81}}{6\mu - 8\lambda  + 9}\right)
\end{align*}
\end{proof}

\noindent By the theorem above, we conclude that Eve's uncertainty cannot grow too much before her asynchronicity becomes negative, therefore resulting in an infeasible strategy. Fixing $\lambda=1/8$, which is the maximum possible error allowed in the Bell term, we plot values of $\epsilon_{max}$ against varying values of Alice and Bob's allowed asynchronicity $\mu$ in Figure (\ref{fig:asynchronicity_v_epsilon}). We also fix a value for Eve's asynchronicity $\tilde{S} = \delta = 0.01$, and plot the maximum value for her uncertainty against varying values of $\mu$ in Figure (\ref{fig:asynchronicity_v_epsilon_delta}). The first plot shows that even for allowed asynchronicity $\mu=5\%$, Eve must have close to perfect certainty $ \approx 97\%$ about Alice and Bob's inputs, otherwise she cannot simulate the statistics for the protocol despite unlimited computational power. 

\begin{figure}[htb]
\centering
\makebox[0pt][c]{%
\begin{minipage}[b]{.47\textwidth}
    \centering
    \begin{tikzpicture}[y=1.2cm, x=1cm]
        \draw (0,0) -- coordinate (x axis mid) (6,0) node[anchor=north west] {$\mu$};
        \draw (0,0) -- coordinate (y axis mid) (0,5) node[anchor=south east] {$\epsilon_{max}$};
        \draw (0, 1pt) -- (0, -3pt) node[anchor=north] {$0$};
        \foreach \x in {1, 2, 3, 4, 5}
            {
            \draw (\x - 0.5, 1pt) -- (\x - 0.5, -3pt) node[anchor=north] {};
            \draw (\x, 1pt) -- (\x, -3pt) node[anchor=north] {$0.0\x$};
            }
        \foreach \y in {1, 2, 3, 4}
            \draw (1pt, \y) -- (-3pt, \y) node[anchor=east] {$\y\times 10^{-2}$};
    	\draw plot[mark=*, mark options={fill=magenta}]
    	    file {eps_max-mu.data};
    	\begin{scope}[shift={(4,4)}] 
            \draw (-3.5,1.0) node[right]{$\lambda = \frac{1}{8}$};
	    \end{scope}
    \end{tikzpicture}
    \captionof{figure}{Values of $\mu$ vs. $\epsilon_{max}$ for which Eve's asynchronicity $\tilde{S}$ is positive}
    \label{fig:asynchronicity_v_epsilon}
\end{minipage} \qquad
\begin{minipage}[b]{.47\textwidth}
   \centering
    \begin{tikzpicture}[y=1.2cm, x=1cm]
        \draw (0,0) -- coordinate (x axis mid) (6,0) node[anchor=north west] {$\mu$};
        \draw (0,0) -- coordinate (y axis mid) (0,5) node[anchor=south east] {$\epsilon^\delta_{max}$};
        \draw (0, 1pt) -- (0, -3pt) node[anchor=north] {$0$};
        \foreach \x in {1, 2, 3, 4, 5}
            {   
            \draw (\x - 0.5, 1pt) -- (\x - 0.5, -3pt) node[anchor=north] {};
            \draw (\x, 1pt) -- (\x, -3pt) node[anchor=north] {$0.0\x$};
            }
        \foreach \y in {1, 2, 3, 4}
            \draw (1pt, \y) -- (-3pt, \y) node[anchor=east] {$\y\times 10^{-2}$};
    	\draw plot[mark=*, mark options={fill=magenta}]
    	    file {eps_delta_max-mu.data};
    	\begin{scope}[shift={(4,4)}] 
            \draw (-3.5,1.0) node[right]{$\tilde{S} = \delta = 0.01, \lambda = \frac{1}{8}$};
	    \end{scope}
    \end{tikzpicture}
    \captionof{figure}{Values of $\mu$ vs. $\epsilon^\delta_{max}$ for which Eve's asynchronicity $\tilde{S}\geq \delta$}
    \label{fig:asynchronicity_v_epsilon_delta}
\end{minipage}%
}%
\end{figure}

\end{section}

\appendix

\newpage
\begin{section}{Synchronous correlations}\label{appendix:synchronous}

For completeness we present a more detailed discussion of synchronous correlations from \cite{paulsen2016estimating,rodrigues2017nonlocal}. Recall a \emph{local hidden variables strategy}, or simply \emph{classical correlation}, is a correlation of the form
    \begin{equation}\label{eqn:classical:definition-appendix}
        p(y_A,y_B\:|\:x_A,x_B) = \sum_{\omega\in\Omega} \mu(\omega) p_A(y_A\:|\: x_A,\omega) p_B(y_B\:|\: x_B,\omega)
    \end{equation}
for some finite set $\Omega$ and probability distribution $\mu$. For $p$ to by synchronous, if Alice and Bob input the same $x\in X$, then they must produce $y_A = y_B$ with certainty. That is, for every $\omega\in \Omega$ we must have $p_A(y_A\:|\: x,\omega) p_B(y_B\:|\: x,\omega) = 0$
whenever $y_A \not= y_B$. This in turn implies that for each $\omega$ there is a function $f:X \to Y$ so that $p_A(y\:|\: x,\omega) = p_B(y\:|\: x,\omega) = \chi(\{y = f_\omega(x)\})$ where $\chi(\cdot)$ is the indicator function. That is, any classical synchronous correlation is given by the following strategy: Alice and Bob (randomly) pre-select a function $f:X \to Y$, and upon given $x_A,x_B\in X$ each computes their respective outputs $y_A = f(x_A)$ and $y_B = f(x_B)$. Consequently every classical synchronous correlation is also symmetric.

A \emph{quantum correlation} is a correlation that takes the form
    \begin{equation}\label{eqn:quantum:definition}
        p(y_A,y_B\:|\:x_A,x_B) = \tr(\rho(E^{x_A}_{y_A}\otimes F^{x_B}_{y_B}))
    \end{equation}
where $\rho$ is a density operator on the Hilbert space $\h{H}_A\otimes\h{H}_B$, and for each $x\in X$ we have $\{E^x_y\}_{y\in Y}$ and $\{F^x_y\}_{y\in Y}$ are POVMs on $\h{H}_A$ and $\h{H}_B$ respectively. 
We will only treat the case when $\h{H}_A$ and $\h{H}_B$ are finite dimensional.

One generally argues that by enlarging the Hilbert spaces one can take the assumed POVMs in the definition to be projection-valued measures. But for synchronous quantum correlations this must already be true \cite[Proposition 1]{cameron2007quantum}, but see also  \cite{abramsky2017quantum,atserias2016quantum,manvcinska2016quantum}.

The works cited above a common result is that if a synchronous quantum correlation exists that satisfies some additional properties, then another such correlation exists whose state is maximally entangled; examples of such include \cite[Proposition 1]{cameron2007quantum}, \cite[Lemma 4]{abramsky2017quantum}, \cite[Theorem 2.1]{manvcinska2016quantum}. It is certainly not the case that every synchronous quantum correlation can be taken to have a maximally entangled state, as these include hidden variables strategies. Nonetheless we can prove that every synchronous quantum correlation is a convex sum of such, and for such correlation we can express it as a so-called ``tracial'' state, as given below.

\begin{lemma}
    Every synchronous quantum correlation can be expressed as the convex combination of synchronous quantum correlations with maximally entangled pure states. In particular, if a synchronous quantum correlation $\tr(\rho(E^{x_A}_{y_A}\otimes F^{x_B}_{y_B}))$ is extremal then we may take $\rho = \ket\psi\bra\psi$ with $\ket\psi$ maximally entangled.
\end{lemma}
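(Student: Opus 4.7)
The strategy is to reduce to pure states and then exploit commutation of the measurements with the reduced density matrix. First, spectrally decompose $\rho = \sum_k \lambda_k \ket{\psi_k}\bra{\psi_k}$. For each $x$ and $y_A \neq y_B$, the quantity $\bra{\psi_k}(E^x_{y_A}\otimes F^x_{y_B})\ket{\psi_k}$ is nonnegative because the tensor product of projections is a positive operator, and the convex sum over $k$ vanishes by synchronicity, forcing each term to vanish. Hence every pure component is itself a synchronous quantum correlation, and it suffices to handle $\rho = \ket\psi\bra\psi$.

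For such a pure synchronous correlation, I would first establish the identity $(E^x_y \otimes \mathbb{1})\ket\psi = (\mathbb{1} \otimes F^x_y)\ket\psi$. Expanding $\|(E^x_y\otimes\mathbb{1})\ket\psi - (\mathbb{1}\otimes F^x_y)\ket\psi\|^2$ gives three terms that all equal $p(y,y\given x,x)$: summing the synchronicity condition over outputs shows that the $A$- and $B$-marginals both collapse onto the diagonal joint probability, so the squared norm vanishes. Next, I would show $[E^x_y, \rho_A] = 0$ for $\rho_A = \tr_B(\ket\psi\bra\psi)$: starting from $\rho_A E^x_y = \tr_B(\ket\psi\bra\psi(E^x_y \otimes \mathbb{1}))$, use the adjoint of the identity to push the projection to the $B$-side, then cyclicity of $\tr_B$ for operators of the form $\mathbb{1} \otimes Y$, and finally the identity in the opposite direction to land on $E^x_y \rho_A$.

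Once commutation is in hand, decompose $\h{H}_A = \bigoplus_l V_l^A$ into distinct-eigenvalue eigenspaces of $\rho_A$, with eigenvalues $\mu_l$ and dimensions $d_l$. Each $E^x_y$ becomes block diagonal with respect to this splitting, the Schmidt decomposition pairs $V_l^A$ with a matching $V_l^B \subset \h{H}_B$ (and similarly $F^x_y$ splits), and $\ket\psi = \sum_l \sqrt{\mu_l d_l}\,\ket{\Omega_l}$ where $\ket{\Omega_l}$ is the maximally entangled state on $V_l^A \otimes V_l^B$. Because $E^{x_A}_{y_A}\otimes F^{x_B}_{y_B}$ preserves each summand $V_l^A\otimes V_l^B$, the off-diagonal cross terms in $\bra\psi(\cdots)\ket\psi$ vanish and one obtains $p = \sum_l (\mu_l d_l)\,p_l$ with $p_l(y_A,y_B\given x_A,x_B) = \bra{\Omega_l}(E^{x_A}_{y_A}\otimes F^{x_B}_{y_B})\ket{\Omega_l}$; this is a convex combination since $\sum_l \mu_l d_l = \tr(\rho_A) = 1$, and each $p_l$ is synchronous by the same nonnegativity argument used in the first step. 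Extremality of $p$ then forces a single surviving term, giving $\ket\psi$ maximally entangled.

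The main technical obstacle is the commutation relation $[E^x_y, \rho_A] = 0$: it requires chaining the identity with its adjoint together with the partial-trace cyclicity rule $\tr_B(X(\mathbb{1}\otimes Y)) = \tr_B((\mathbb{1}\otimes Y)X)$, which is easy to get syntactically wrong. Everything else is essentially spectral bookkeeping once that commutation, and hence the invariance of the Schmidt eigenspaces under the measurement projections, is established.
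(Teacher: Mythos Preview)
The paper does not actually prove this lemma; it is quoted from \cite[Lemma 7]{rodrigues2017nonlocal} without argument. Your proposal is correct and is essentially the standard proof of that result: reduce to pure states by positivity, establish $(E^x_y\otimes\mathbb{1})\ket\psi = (\mathbb{1}\otimes F^x_y)\ket\psi$ from synchronicity, deduce $[E^x_y,\rho_A]=0$ via partial-trace cyclicity, and then block-decompose along the Schmidt eigenspaces of $\rho_A$.

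Two small points. First, your norm computation in step 2 needs $(E^x_y)^2=E^x_y$, so you are implicitly assuming the POVMs are projection-valued; the paper notes just before the lemma that this is automatic for synchronous correlations (citing \cite{cameron2007quantum}), but you should state the assumption or invoke that fact explicitly. Second, extremality of $p$ in $p=\sum_l \mu_l d_l\, p_l$ does not force a \emph{single} surviving term---it only forces every $p_l$ to equal $p$. The conclusion is unaffected, since any one $p_l$ then realizes $p$ with a maximally entangled state, but the phrasing ``a single surviving term'' is not quite right.
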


\begin{theorem}\label{theorem:tracial-state}
Let $X,Y$ be finite sets, $\h{H}$ a $d$-dimensional Hilbert space, and for each $x\in X$ a projection-valued measure $\{E^x_y\}_{y\in Y}$ on $\h{H}$. Then
$$p(y_A,y_B\:|\: x_A,x_B) = \frac{1}{d}\tr(E^{x_A}_{y_A}E^{x_B}_{y_B})$$
defines a synchronous quantum correlation. Moreover every synchronous quantum correlation with maximally entangled pure state has this form.
\end{theorem}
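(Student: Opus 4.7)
The plan is as follows. The theorem has two parts: (1) the tracial formula defines a synchronous quantum correlation, and (2) every synchronous quantum correlation with maximally entangled pure state has this form. Both parts hinge on a single identity for the canonical maximally entangled state $\ket\psi = \frac{1}{\sqrt{d}}\sum_i\ket{i}\ket{i}$ on $\h{H}\otimes\h{H}$, namely $\bra\psi A\otimes B\ket\psi = \frac{1}{d}\tr(AB^T)$, where the transpose is taken with respect to the Schmidt basis. A direct component-wise computation establishes this identity, after which everything reduces to tracial manipulations on $\h{H}$ alone.

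For the existence direction, I would take $\ket\psi$ as above, let Alice's PVM be $\{E^x_y\}$, and define Bob's PVM by $\{(E^x_y)^T\}$. The identity immediately yields $\bra\psi E^{x_A}_{y_A}\otimes(E^{x_B}_{y_B})^T\ket\psi = \frac{1}{d}\tr(E^{x_A}_{y_A}E^{x_B}_{y_B})$. Synchronicity is then immediate from the orthogonality relation $E^x_{y_A}E^x_{y_B}=0$ for $y_A\ne y_B$ in any PVM.

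For the converse, suppose $p(y_A,y_B\given x_A,x_B) = \bra\psi E^{x_A}_{y_A}\otimes F^{x_B}_{y_B}\ket\psi$ is synchronous with $\ket\psi$ maximally entangled. By the cited results I may assume $\{E^x_y\}$ and $\{F^x_y\}$ are projection-valued measures, and by adjusting the Schmidt bases I may take $\ket\psi$ in the canonical form above. The identity rewrites the correlation as $\frac{1}{d}\tr(E^{x_A}_{y_A}(F^{x_B}_{y_B})^T)$. Setting $G^x_y := (F^x_y)^T$, which is again a PVM of projections, it suffices to show $G^x_y = E^x_y$. Synchronicity gives $\tr(E^x_{y_A}G^x_{y_B})=0$ for $y_A\ne y_B$; summing against $\sum_{y_B}G^x_{y_B}=\mathbb{1}$ then forces $\tr(E^x_y(\mathbb{1}-G^x_y))=0$. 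Using $E^x_y=(E^x_y)^2$ and cyclicity of trace, this becomes $\tr(E^x_y(\mathbb{1}-G^x_y)E^x_y)=0$; the sandwiched operator is positive and hence vanishes, yielding $(\mathbb{1}-G^x_y)E^x_y=0$, so $\mathrm{range}(E^x_y)\subseteq\mathrm{range}(G^x_y)$. The symmetric argument (summing over $y_A$ first, with the roles of $E$ and $G$ exchanged) gives the reverse containment, and equality of ranges of orthogonal projections forces $E^x_y=G^x_y$.

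The main obstacle is precisely this final step: extracting the operator identity $E^x_y=G^x_y$ from the scalar equation $\tr(E^x_y G^x_y)=\tr(E^x_y)$, a step that would fail for generic positive operators. The argument depends crucially on two features: $E^x_y$ being a projection (so that the sandwich $E^x_y(\mathbb{1}-G^x_y)E^x_y$ is available), and $\mathbb{1}-G^x_y$ being a sum of projections and hence positive. Together these reduce the problem to comparing supports of orthogonal projections, and the tracial formula then follows without further work.
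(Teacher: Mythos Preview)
Your argument is correct. The identity $\bra\psi A\otimes B\ket\psi = \frac{1}{d}\tr(AB^T)$ for the canonical maximally entangled state is exactly the right bridge, and your extraction of $E^x_y = G^x_y$ from the vanishing traces is sound: the sandwich $E^x_y(\mathbb{1}-G^x_y)E^x_y$ is positive because $\mathbb{1}-G^x_y = \sum_{y'\neq y} G^x_{y'}\ge 0$, and a positive operator with zero trace vanishes, which after taking a square root gives $(\mathbb{1}-G^x_y)E^x_y = 0$. The symmetric containment of ranges then forces equality of the projections.

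Note that the paper itself does not prove this theorem; it quotes the statement from \cite[Lemma 7 and Theorem 8]{rodrigues2017nonlocal} and relies on \cite[Proposition 1]{cameron2007quantum} for the reduction to projection-valued measures, exactly as you invoke. So there is no in-paper proof to compare against, but your argument is the standard one and matches what the cited reference does. One small observation: your positivity step actually works for arbitrary POVM elements (since $\tr(AB)=0$ with $A,B\ge 0$ already forces $AB=0$ via $\tr(B^{1/2}AB^{1/2})=0$), and then $E^x_y = E^x_y G^x_y = G^x_y$ combined with $E^x_y(\mathbb{1}-E^x_y)=0$ shows the measures are automatically projective. So you could drop the appeal to \cite{cameron2007quantum} entirely if you wished, though invoking it as you do is also perfectly acceptable.
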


\begin{corollary}
    Every synchronous quantum correlation is symmetric.
\end{corollary}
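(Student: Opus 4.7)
The plan is to combine the Lemma and Theorem \ref{theorem:tracial-state} immediately preceding the corollary with the cyclic property of the trace. First, I would invoke the Lemma to reduce to the extremal case: an arbitrary synchronous quantum correlation $p$ can be written as a convex combination $p = \sum_k c_k q_k$ where each $q_k$ is a synchronous quantum correlation arising from a maximally entangled pure state. Since symmetry is preserved under convex combinations (if each $q_k(y_A,y_B\:|\:x_A,x_B) = q_k(y_B,y_A\:|\:x_B,x_A)$, then so does $p$), it suffices to prove the claim for each $q_k$.

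Next, I would apply Theorem \ref{theorem:tracial-state} to express each extremal $q_k$ in tracial form on a single $d$-dimensional Hilbert space $\h{H}$:
$$q_k(y_A,y_B\:|\:x_A,x_B) = \frac{1}{d}\tr\bigl(E^{x_A}_{y_A} E^{x_B}_{y_B}\bigr).$$
Now the symmetry is immediate from the cyclicity of the trace: $\tr(E^{x_A}_{y_A} E^{x_B}_{y_B}) = \tr(E^{x_B}_{y_B} E^{x_A}_{y_A})$, so $q_k(y_A,y_B\:|\:x_A,x_B) = q_k(y_B,y_A\:|\:x_B,x_A)$.

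There is no real obstacle here; the corollary is essentially a one-line consequence of the preceding structural results. The only thing to verify carefully is the convex-combination step, which is routine since the symmetry condition is a linear equality in $p$ and therefore is closed under taking convex sums. In particular, no further hypothesis on the original POVMs $\{E^x_y\}, \{F^x_y\}$ or the state $\rho$ in the general definition of a quantum correlation is needed, because the reduction to maximally entangled pure states already absorbs the asymmetry between $\h{H}_A$ and $\h{H}_B$ into a single Hilbert space with a single family of projection-valued measures.
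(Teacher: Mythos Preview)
Your proposal is correct and is exactly the argument the paper intends: the corollary is placed immediately after the Lemma and Theorem~\ref{theorem:tracial-state} precisely so that one reduces to the tracial form and then applies cyclicity of the trace, with the convex-combination step handling the general case. There is nothing to add.
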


When studying correlations with $|X| = n$ and $|Y|=2$, and for concreteness say $Y = \{0,1\}$, it is particularly fruitful to work with the traditional biases and correlation matrices:
\begin{align*}
    a_{x_A} &= \sum_{y_A, y_B}{(-1)^{(1,0)\cdot(y_A, y_B)} p(y_A, y_B | x_A, x_B)}\\
    b_{x_B} &= \sum_{y_A, y_B}{(-1)^{(0,1)\cdot(y_A, y_B)} p(y_A, y_B | x_A, x_B)}\\
    c_{x_A,x_B} &= \sum_{y_A, y_B}{(-1)^{(1,1)\cdot(y_A, y_B)} p(y_A, y_B | x_A, x_B)}.
\end{align*}
Note that the nonsignaling criteria implies that $a$ and $b$ do not depend on $x_B$ or $x_A$ respectively. 

The properties of a correlation being symmetric or synchronous can easily expressed in this variables. As indicated this form exists only for nonsignaling correlations, and so that must be included in the characterization.

\begin{proposition}
    The following hold:
    \begin{enumerate}
        \item A correlation $p$ is symmetric and nonsignaling if and only if (i) $c_{x_A,x_B} = c_{x_B,x_A}$ and (ii) $a_x = b_x$.
        \item A correlation $p$ is synchronous and nonsignaling if and only if for all $x\in X$ we have (i) $c_{x,x} = 1$ and (ii) $a_x = b_x$.
        \item A correlation matrix $C$ is synchronous and quantum if and only if there exists unit vectors $\{\vec{u}_x\}$ such that $c_{x_A,x_B} = \langle \vec{u}_{x_A}, \vec{u}_{x_B}\rangle$.
    \end{enumerate}
\end{proposition}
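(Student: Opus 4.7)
The entire proposition is about two-outcome correlations ($Y=\{0,1\}$), so the plan is to exploit the finite Fourier/Hadamard inversion that reconstructs any nonsignalling $p$ from its biases and correlations. Concretely, for a nonsignalling $p$ one checks the identity
\begin{equation*}
p(y_A,y_B\:|\:x_A,x_B) \;=\; \tfrac{1}{4}\bigl(1 + (-1)^{y_A} a_{x_A} + (-1)^{y_B} b_{x_B} + (-1)^{y_A+y_B} c_{x_A,x_B}\bigr),
\end{equation*}
using that nonsignalling is exactly what guarantees that the marginals $a_{x_A}$ and $b_{x_B}$ are well-defined independent of the other party's input. I would open the proof by recording this inversion; the rest of (1) and (2) reduces to coefficient-matching in this expansion.

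\textbf{Parts (1) and (2).} For (1), substitute the Fourier formula into both sides of $p(y_A,y_B\:|\:x_A,x_B) = p(y_B,y_A\:|\:x_B,x_A)$ and equate the coefficients of the four characters $(-1)^{\alpha y_A+\beta y_B}$; the constant term is automatic, the $(-1)^{y_A}$ and $(-1)^{y_B}$ coefficients force $a_x = b_x$, and the $(-1)^{y_A+y_B}$ coefficient forces $c_{x_A,x_B} = c_{x_B,x_A}$. The converse is immediate by plugging these two relations back into the formula. For (2), specialize to $x_A=x_B=x$: synchronicity says $p(0,1\:|\:x,x)=p(1,0\:|\:x,x)=0$, which in the Fourier formula becomes the linear system $1 + a_x - b_x - c_{x,x}=0$ and $1 - a_x + b_x - c_{x,x}=0$; adding yields $c_{x,x}=1$ and subtracting yields $a_x=b_x$. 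Conversely, if $c_{x,x}=1$ then $p(0,1\:|\:x,x)+p(1,0\:|\:x,x) = \tfrac{1}{2}(1-c_{x,x})=0$, and nonnegativity of $p$ forces each term to vanish.

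\textbf{Part (3), forward direction.} By Theorem~\ref{theorem:tracial-state}, together with the lemma that every synchronous quantum correlation is a convex combination of ones with maximally entangled state, and by linearity of $c_{x_A,x_B}$ in $p$, it suffices to handle the tracial form $p(y_A,y_B\:|\:x_A,x_B)=\tfrac{1}{d}\tr(E^{x_A}_{y_A}E^{x_B}_{y_B})$. Setting $M_x = E^x_0 - E^x_1$, orthogonality of the projections gives $M_x^2=\mathbb{1}$, so in the Hilbert space of $d\times d$ matrices with normalized Hilbert--Schmidt inner product $\langle A,B\rangle = \tfrac{1}{d}\tr(A^*B)$, the $M_x$ are self-adjoint unit vectors and $c_{x_A,x_B}=\tfrac{1}{d}\tr(M_{x_A}M_{x_B})=\langle M_{x_A},M_{x_B}\rangle$. (Convex combinations preserve the Gram-matrix form by taking a direct sum of the individual matrix Hilbert spaces.)

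\textbf{Part (3), converse, and main obstacle.} The substantive step is building a synchronous quantum correlation from abstract geometric data $\{\vec{u}_x\}$. Note first that $c$ is real because $\tr(M_{x_A}M_{x_B}) = \tr(M_{x_B}M_{x_A})$ by cyclicity while also being the conjugate of $\tr((M_{x_A}M_{x_B})^*) = \tr(M_{x_B}M_{x_A})$, so it suffices to treat real unit vectors in $\mathbb{R}^n$. I would then invoke a Clifford algebra realization: let $\Gamma_1,\dots,\Gamma_n$ be Hermitian generators on $\mathfrak{H}=\mathbb{C}^{2^{\lceil n/2\rceil}}$ satisfying $\{\Gamma_i,\Gamma_j\}=2\delta_{ij}\mathbb{1}$, and set $M_x = \sum_i u_{x,i}\Gamma_i$. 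Then $M_x^*=M_x$ and $M_x^2=\|\vec{u}_x\|^2\mathbb{1}=\mathbb{1}$, while $\tfrac{1}{2}\{M_{x_A},M_{x_B}\}=\langle\vec{u}_{x_A},\vec{u}_{x_B}\rangle\mathbb{1}$ and the commutator $[M_{x_A},M_{x_B}]$ is traceless, giving $\tfrac{1}{d}\tr(M_{x_A}M_{x_B})=\langle\vec{u}_{x_A},\vec{u}_{x_B}\rangle$. The projections $E^x_0=\tfrac{1}{2}(\mathbb{1}+M_x)$ and $E^x_1=\tfrac{1}{2}(\mathbb{1}-M_x)$, inserted into the tracial formula of Theorem~\ref{theorem:tracial-state}, yield a synchronous quantum correlation with the prescribed $c$. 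The only real obstacle is this converse construction; parts (1), (2), and the forward half of (3) are essentially bookkeeping around the Fourier inversion and the definition of $M_x$.
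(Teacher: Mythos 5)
Your proof is correct, but note that the paper itself never proves this proposition: it is imported by citation from Lemmas 11--13 of the companion work \cite{rodrigues2017nonlocal}, so your argument is a self-contained replacement rather than a parallel of an in-paper proof. Your route is the natural one and checks out: the Fourier inversion $p(y_A,y_B\:|\:x_A,x_B) = \tfrac14\bigl(1+(-1)^{y_A}a_{x_A}+(-1)^{y_B}b_{x_B}+(-1)^{y_A+y_B}c_{x_A,x_B}\bigr)$ (valid once nonsignalling makes $a,b$ well defined) reduces parts (1) and (2) to coefficient matching, and your handling of (2) via the sum $p(0,1\:|\:x,x)+p(1,0\:|\:x,x)=\tfrac12(1-c_{x,x})$ plus nonnegativity is exactly right. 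For (3), the forward direction correctly leans on the convex-decomposition lemma and Theorem \ref{theorem:tracial-state} to reduce to the tracial form, where $\vec u_x = M_x = E^x_0-E^x_1$ are unit vectors in the normalized Hilbert--Schmidt space; just be explicit that the direct sum over the convex components must carry the weights $\lambda_k$ in its inner product so that the concatenated vectors remain unit vectors (your parenthetical gestures at this but does not state it). The converse via a Clifford/Tsirelson-type construction, $M_x=\sum_i u_{x,i}\Gamma_i$ with $\tfrac1d\tr(M_{x_A}M_{x_B})=\langle\vec u_{x_A},\vec u_{x_B}\rangle$ because the anticommutator is scalar and the commutator is traceless, together with $E^x_{0,1}=\tfrac12(\mathbb{1}\pm M_x)$ fed into Theorem \ref{theorem:tracial-state}, is a clean and complete way to realize any prescribed real Gram matrix with unit diagonal; the passage from possibly complex vectors to real ones is justified since a real positive semidefinite Gram matrix admits a real factorization. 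The only substantive content of the proposition is this converse, and you supply it; everything else is bookkeeping, as you say.
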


We see that a general symmetric, synchronous, nonsignaling correlation has $c_{x_A,x_B} = c_{x_B,x_A}$, $c_{x,x} = 1$, and $a_x = b_x$. The set of such correlation forms a polytope. The classical synchronous correlations form a subpolytope of this set, and the inequalities from facets of this subpolytope that are not already facets of the larger set define \emph{synchronous Bell inequalities}.

For even moderate size $X$ it is complicated to find all such Bell inequalities, but for small $X$ this is tractable. For example, if $X = \{0,1\}$ then there are no Bell inequalities: every symmetric, synchronous, nonsignaling correlation (and hence also synchronous quantum correlation) is classical. The case of interest for this work is $X = \{0,1,2\}$ for which the synchronous Bell inequalities are:
\begin{equation}\label{eqn:Bell-inequalities}
    \begin{array}{rll}
    J_0 &= \tfrac{1}{4}\left(1 - c_{01} - c_{02} + c_{12}\right) &\geq 0\\
    J_1 &= \tfrac{1}{4}\left(1 - c_{01} + c_{02} - c_{12}\right) &\geq 0\\
    J_2 &= \tfrac{1}{4}\left(1 + c_{01} - c_{02} - c_{12}\right) &\geq 0\\
    J_3 &= \tfrac{1}{4}\left(1 + c_{01} + c_{02} + c_{12}\right) &\geq 0.
    \end{array}
\end{equation}
So a symmetric synchronous nonsignaling correlation, which includes any synchronous quantum correlation, is classical if and only if these four inequalities are satisfied.

Quantum correlations can violate the Bell inequalities (\ref{eqn:Bell-inequalities}). Yet, we can show there are maximal quantum violations akin to Tsirl'son bounds on Bell's inequality \cite{tsirelson1980quantum}. Note that we have stated this as in \cite[Theorem 15]{rodrigues2017nonlocal}, however the core of proof of this is already present our equation (\ref{eqn:trace-Delta-squared-bound}) above: $\frac{1}{d}\tr(\Delta^2) = 1 + 8J_3$ is nonnegative.

\begin{theorem}\label{theorem:synchronous-quantum-bounds}
    Every synchronous quantum correlation satisfies $J_0, J_1, J_2, J_3 \geq -\frac{1}{8}$. However no individual correlation can violate more than one of the inequalities $J_0, J_1, J_2, J_3 \geq 0$.
\end{theorem}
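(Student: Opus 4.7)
The plan is to reduce to the tracial representation and then generalize the ``$\Delta^2$ trick'' that was already used to prove the $J_3\geq-\tfrac{1}{8}$ bound in Theorem~\ref{theorem:main_bound}. First, by the Lemma recalled just above Theorem~\ref{theorem:tracial-state}, every synchronous quantum correlation is a convex combination of extremal synchronous quantum correlations, each of which can be written in the tracial form $p(y_A,y_B\:|\:x_A,x_B)=\tfrac{1}{d}\tr(E^{x_A}_{y_A}E^{x_B}_{y_B})$ for projection-valued measures $\{E^x_y\}$ on some $d$-dimensional Hilbert space. Since each $J_i$ is an affine functional of $p$, it suffices to verify the inequalities $J_i\geq-\tfrac{1}{8}$ and $|c_{x_A,x_B}|\leq 1$ for a single such tracial correlation; the statements for convex combinations follow.

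For the lower bounds, introduce the $\pm 1$ observables $M_x=E^x_0-E^x_1$ so that $M_x^2=\mathbb{1}$ and $c_{x_A,x_B}=\tfrac{1}{d}\tr(M_{x_A}M_{x_B})$. For each $i\in\{0,1,2,3\}$ I will choose signs $s^{(i)}_0,s^{(i)}_1,s^{(i)}_2\in\{\pm 1\}$ matching the sign pattern with which $c_{01},c_{02},c_{12}$ appear in $J_i$ (concretely, flip the sign of $M_i$ for $i\in\{0,1,2\}$ and take all signs positive for $J_3$), and set $\Delta_i=\sum_x s^{(i)}_x M_x$. Expanding $\Delta_i^2$ and tracing, the diagonal terms contribute $3$ while the off-diagonal terms contribute exactly $2$ times the signed sum of $c_{01},c_{02},c_{12}$ that defines $4J_i-1$, so
\[
\tfrac{1}{d}\tr(\Delta_i^2)=1+8J_i.
\]
Because $\Delta_i^2\geq 0$, the left side is non-negative, which gives $J_i\geq-\tfrac{1}{8}$. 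This is exactly equation (\ref{eqn:trace-Delta-squared-bound}) applied with the appropriate signs.

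For the second assertion, I will show that every pair $J_i+J_j$ reduces to $\tfrac{1}{2}(1\pm c_{kl})$ for some $k,l$. For instance $J_0+J_1=\tfrac{1}{2}(1-c_{01})$, $J_2+J_3=\tfrac{1}{2}(1+c_{01})$, and the other four pairs are analogous with $c_{02}$ or $c_{12}$. Since $M_x^2=\mathbb{1}$, Cauchy--Schwarz for the normalised trace inner product gives $|c_{x_A,x_B}|=|\tfrac{1}{d}\tr(M_{x_A}M_{x_B})|\leq 1$ (this also follows from the unit-vector characterisation in part (3) of the Proposition preceding the theorem). Therefore every pairwise sum $J_i+J_j$ is non-negative, so at most one $J_i$ can be strictly negative, i.e.\ no individual correlation can violate more than one of $J_0,J_1,J_2,J_3\geq 0$.

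The only mildly delicate point is the convex-combination reduction: one must check that the four $J_i$ bounds and the bound $|c_{kl}|\leq 1$ are both preserved by convex sums, which they are because each $J_i$ and each $c_{kl}$ is affine in $p$. Everything else is the routine bookkeeping already carried out for $J_3$ in the proof of Theorem~\ref{theorem:main_bound}.
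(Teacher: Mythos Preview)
Your argument is correct. The paper does not supply its own proof of this theorem (it is quoted from \cite{rodrigues2017nonlocal}), but your approach is exactly the one implicit in the surrounding material: the identity $\tfrac{1}{d}\tr(\Delta_i^2)=1+8J_i$ is precisely the computation (\ref{eqn:trace-Delta-squared-bound}) with signs, and the pairwise sums $J_i+J_j=\tfrac{1}{2}(1\pm c_{kl})\geq 0$ follow from $|c_{kl}|\leq 1$.

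One small streamlining: you route everything through the tracial representation and then pass to convex combinations, but part~(3) of the Proposition immediately gives unit vectors $\vec u_0,\vec u_1,\vec u_2$ with $c_{kl}=\langle\vec u_k,\vec u_l\rangle$ for \emph{any} synchronous quantum correlation. Then $\|\!\sum_x s^{(i)}_x\vec u_x\|^2=1+8J_i\geq 0$ and $|c_{kl}|\leq 1$ are both immediate, with no convex-decomposition bookkeeping needed. This is the same argument, just packaged one level of abstraction higher.
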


In CHSH, and similar nonlocal games, device-independence is a consequence of the rigidity of quantum correlations that achieve a maximal quantum violation. Identical rigidity results are true of the four synchronous quantum correlations above.

\begin{theorem}\label{theorem:rigidity}
    For each of the four bounds of Theorem \ref{theorem:synchronous-quantum-bounds}, there exists a unique synchronous quantum correlation from $\{0,1,2\}$ to $\{0,1\}$ that achieves it.
\end{theorem}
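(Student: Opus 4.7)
The plan is to reduce to extremal tracial realizations and then classify them via two-projections theory. By the Lemma preceding Theorem~\ref{theorem:tracial-state}, every synchronous quantum correlation is a convex combination of extremal synchronous quantum correlations whose states are maximally entangled. Since $J_3$ is linear in $p$, attainment of the minimum $-\tfrac{1}{8}$ forces every extremal summand to attain it, so by Theorem~\ref{theorem:tracial-state} it suffices to classify tracial realizations
$$p(y_A,y_B\:|\:x_A,x_B) = \frac{1}{d}\tr(E^{x_A}_{y_A}E^{x_B}_{y_B})$$
saturating $J_3 = -\tfrac{1}{8}$. Because a symmetric nonsignaling synchronous correlation on $|X|=3$, $|Y|=2$ is determined by the three biases $a_x$ and the three off-diagonal correlations $c_{x_A x_B}$, uniqueness of $p$ reduces to showing that these quantities are forced.

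I would introduce the observables $M_x = E^x_0 - E^x_1$ and $\Delta = M_0 + M_1 + M_2$. The identity $\tfrac{1}{d}\tr(\Delta^2) = 1 + 8J_3$ already derived in the proof of Theorem~\ref{theorem:main_bound} shows that $J_3 = -\tfrac{1}{8}$ forces $\tr(\Delta^2) = 0$; since $\Delta$ is self-adjoint, this gives $\Delta = 0$, i.e., $M_2 = -(M_0 + M_1)$. Squaring and using $M_x^2 = \mathbb{1}$ yields the anticommutation relation $M_0 M_1 + M_1 M_0 = -\mathbb{1}$ (and by the cyclic symmetry of $\Delta = 0$, the analogous identity holds for every pair).

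Next I would apply the two-projections decomposition of $\{E^0_0, E^1_0\}$ used in the proof of Theorem~\ref{theorem:main_bound}: $\mathfrak{H} = \bigoplus_{\alpha,\beta\in\{0,1\}}\mathfrak{L}_{\alpha\beta} \oplus \bigoplus_{j=1}^k \mathfrak{H}_j$ with $\dim \mathfrak{H}_j = 2$. On each summand $\mathfrak{L}_{\alpha\beta}$ the operators $M_0, M_1$ are signed identities and their anticommutator equals $\pm 2\mathbb{1}$, never $-\mathbb{1}$, so each $\mathfrak{L}_{\alpha\beta}$ must be zero-dimensional. On each $\mathfrak{H}_j$ the explicit $2\times 2$ form gives $M_0 M_1 + M_1 M_0 = 2\cos(2\theta_j)\mathbb{1}$, forcing $\cos(2\theta_j) = -\tfrac{1}{2}$ and hence $\theta_j \in \{\pi/3, 2\pi/3\}$. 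A direct $2\times 2$ trace computation then yields $\tr(M_x|_{\mathfrak{H}_j}) = 0$ for every $x$ and $\tr(M_{x_A}M_{x_B}|_{\mathfrak{H}_j}) = -1$ for $x_A \ne x_B$, independent of the choice of $\theta_j$. Averaging over $d = 2k$ gives $a_x = 0$ and $c_{x_A x_B} = -\tfrac{1}{2}$ for $x_A \ne x_B$, uniquely recovering the correlation (\ref{eqn:protocol-correlation}). Existence of a minimizer is supplied by the explicit construction (\ref{eqn:projections-exy}).

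For the remaining three bounds, relabeling the outputs $0 \leftrightarrow 1$ at a single input $x_0 \in \{0,1,2\}$ sends $M_{x_0} \mapsto -M_{x_0}$; this is an involutive bijection on synchronous quantum correlations, and one checks that the three choices $x_0 = 0, 1, 2$ send $J_3$ to $J_0, J_1, J_2$ respectively (by flipping exactly two of the three off-diagonal $c_{x_A x_B}$). Hence each of the four bounds $J_i = -\tfrac{1}{8}$ is attained by a unique correlation, obtained from (\ref{eqn:protocol-correlation}) by the corresponding relabeling. The main technical step is the elimination of the commuting summands $\mathfrak{L}_{\alpha\beta}$ via the anticommutation relation forced by $\Delta = 0$; once that is in place, everything else reduces to routine $2\times 2$ arithmetic.
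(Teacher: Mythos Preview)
Your argument is correct. The paper itself does not supply a proof of Theorem~\ref{theorem:rigidity}; it is quoted from \cite[Theorem~16]{rodrigues2017nonlocal} and followed only by the remark that the unique $J_3=-\tfrac{1}{8}$ correlation has $a_x=0$ and $c_{x_A,x_B}=-\tfrac{1}{2}$. So there is no in-paper proof to compare against directly.

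That said, your route is precisely the exact-case specialization of the machinery the paper develops for Theorem~\ref{theorem:main_bound}. There the identity $\tfrac{1}{d}\tr(\Delta^2)=1+8J_3$ and the two-projections decomposition are used to obtain approximate control when $J_3\le -\tfrac{1}{8}+\lambda$; you observe that at $\lambda=0$ one gets the operator identity $\Delta=0$, whence $M_0M_1+M_1M_0=-\mathbb{1}$, which kills the commuting summands $\mathfrak{L}_{\alpha\beta}$ outright and pins each $\theta_j$ to $\cos 2\theta_j=-\tfrac{1}{2}$. The trace computations then give $a_x=0$ and $c_{x_Ax_B}=-\tfrac{1}{2}$ independently of $d$ and of the choice $\theta_j\in\{\pi/3,2\pi/3\}$, so every tracial (hence, after the convexity reduction, every) minimizer yields the same correlation. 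The output-relabeling symmetry $M_{x_0}\mapsto -M_{x_0}$ transporting $J_3$ to $J_0,J_1,J_2$ matches the sign pattern in (\ref{eqn:Bell-inequalities}) exactly. Everything checks.
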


For example, the unique correlation (among synchronous quantum correlations) with $J_3 = -\frac{1}{8}$ has $a_0 = a_1 = a_2 = 0$ and $c_{0,1} = c_{0,2} = c_{1,2} = -\frac{1}{2}$. One can then easily convert this expression into (\ref{eqn:effective-J3}).

\end{section}

\end{document}